\documentclass[11pt]{article}

\usepackage[letterpaper,margin=1.12in]{geometry}
\usepackage{newtxtext,newtxmath}
\usepackage{microtype}
\usepackage{amsmath,booktabs,array}

\usepackage{amsthm}
\usepackage[hidelinks]{hyperref}
\usepackage{enumitem}

\setlist{nosep,leftmargin=*}
\newtheorem{theorem}{Theorem}
\newtheorem{proposition}[theorem]{Proposition}
\newtheorem{lemma}[theorem]{Lemma}
\newtheorem{corollary}[theorem]{Corollary}
\newtheorem{remark}[theorem]{Remark}
\theoremstyle{definition}

\newcommand{\R}{\mathbb{R}}

\newcommand{\succeqB}{\succeq_{\!B}}
\newcommand{\MGF}{\mathsf M}
\newcommand{\KL}{\operatorname{KL}}

\title{\vspace{-1.5em}\textbf{Blackwell Boundaries} }
\author{Shuo Li Liu\\\texttt{shuoliliu@princeton.edu}}
\date{September 2026}

\begin{document}
\maketitle
\vspace{-1.2em}

\begin{abstract}
We give a finite simplex theorem. When the number of states equals the
number of source signals and the source posterior likelihood-ratio vectors
form a simplex, Blackwell dominance is characterized by a universal
barycentric-coordinate condition. We also give an explicit countable-state,
countable-signal diagnostic-monitor theorem with a closed-form universal
criterion and a closed-form garbling; tensorization yields dominance at every
sample size. Finally, we provide a three-state, two-signal counterexample to
the sufficiency conjectured by Mu, Pomatto, Strack, and Tamuz for their
many-state moment-generating-function conditions for large-sample Blackwell
dominance. The pair satisfies strict comparisons on both parameter domains,
all ordered Kullback--Leibler inequalities, bounded likelihood ratios and
pairwise genericity, but dominance fails at every positive sample size. The
finite theorem, obstruction, and continuum inequalities are verified in Lean
4.30 with Mathlib; the countable diagnostic proof is given analytically.
\end{abstract}

\noindent\textit{JEL Classification:} C72, D82, D83, O30.\\
\textit{Keywords:} Blackwell order; information structures; repeated experiments; monitoring; audit design; AI control.

\section{Introduction}

Blackwell dominance is the standard measure of informational superiority. If a stochastic kernel $K$ maps source signals into target signals,
\begin{equation}
Q_i(y)=\sum_x P_i(x)K(x,y)\qquad\text{for every state }i,
\label{eq:garbling}
\end{equation}
then every decision rule available under $Q$ can be reproduced under $P$. In applications, a richer monitoring log can be deliberately degraded, whereas a poorer log need not reproduce every policy based on the richer log.

The product problem with several states is subtle. Product likelihood and divergence inequalities may hold even when no finite Blackwell garbling exists. We isolate the local mechanism responsible for the failure. A target atom lies on a likelihood-ratio equality face. Equality of the product of ratios then forces equality in each coordinate. The required source atom is absent from the source support. In tropical coordinates, this appears as equality at a nonvertex boundary point.

Our main positive results are the finite simplex theorem and the countable
diagnostic-monitor theorem stated below.
The main negative result is a three-state, two-signal counterexample that
satisfies Mu's full necessary-condition package but fails Blackwell
dominance at every sample size. We place the audit problem, the residual-coupling
lemma, and the analytic details in the appendices.

Restricted noisy-audit, common-noise, and modular completions appear
in the appendices. They serve as benchmark model classes rather than assumptions in the
unrestricted results.

The counterexample argument is self-contained. The accompanying Lean source
records its arithmetic, analytic inequalities, and finite transport arguments.

\subsection*{Relation to the broader conjecture}

The displayed pair satisfies the full set of necessary conditions proposed for
many-state dominance in large samples, including the continuum positive and
negative MGF inequalities, all ordered KL inequalities, boundedness, and
pairwise genericity. Yet the all-$n$ obstruction below rules out
eventual dominance. One valid three-state, two-signal experiment is therefore
enough to refute the conjecture as a universal claim over finite experiments.
We do not claim that every experiment with more than three states, or every
signal cardinality, fails; such stronger statements require separate
constructions.

Any primitive condition $(\mathrm{C4})$ that is jointly necessary and
sufficient with Mu's conditions for the unrestricted finite class must fail
for the pair in \eqref{eq:experiments}. A successful non-circular repair must
therefore exclude the nonrectangular equality-face geometry exhibited below;
it cannot follow from boundedness, positivity, genericity, or the MGF/KL
inequalities alone.
The example also satisfies the usual monotone-likelihood-ratio/TP2 property
after relabelling the states and ordering the two signals suitably, and its
probability matrices have full column rank. Those standard regularity
conditions therefore do not repair the many-state conjecture either.

\section{Experiments and the product family}

Let the state space be $I=\{0,1,2\}$ and the signal space be finite. An experiment is a family $E=(E_i)_{i\in I}$ of probability distributions on signals. We write $P\succeqB Q$ when there is a row-stochastic kernel satisfying \eqref{eq:garbling}. For $n$ conditionally independent observations, the product experiments are $P^{\otimes n}$ and $Q^{\otimes n}$.

The one-period experiments are
\begin{equation}
\begin{aligned}
P_0&=(1/2,1/2), & P_1&=(1/5,4/5), & P_2&=(4/5,1/5),\\
Q_0&=(1/2,1/2), & Q_1&=(2/5,3/5), & Q_2&=(11/15,4/15).
\end{aligned}
\label{eq:experiments}
\end{equation}

For a binary word $w\in\{0,1\}^n$, let $m(w)$ denote the number of second-coordinate signals. Relative to state $0$, the source likelihood ratios are
\begin{equation}
r_1(n,m)=\frac{8^m2^{n-m}}{5^n},\qquad
r_2(n,m)=\frac{2^m8^{n-m}}{5^n}.
\label{eq:source-ratios}
\end{equation}
The target boundary word is the all-second-coordinate word, with ratios
\begin{equation}
s_1(n)=\left(\frac65\right)^n,\qquad
s_2(n)=\left(\frac8{15}\right)^n.
\label{eq:target-ratios}
\end{equation}

The state-$0$ count masses are
\begin{equation}
\mu_n(m)=\frac{\binom{n}{m}}{2^n},\qquad
\nu_n(k)=\frac{\binom{n}{k}}{2^n}.
\label{eq:count-masses}
\end{equation}
The raw and count-level arrays agree because each count fiber has exactly $\binom{n}{m}$ words. For example,
\begin{equation}
\begin{aligned}
\mu_n(m)r_1(n,m)&=\binom{n}{m}\frac{4^m}{5^n},
&\mu_n(m)r_2(n,m)&=\binom{n}{m}\frac{4^{n-m}}{5^n},\\
\nu_n(k)s_1(n,k)&=\binom{n}{k}\frac{3^k2^{n-k}}{5^n},
&\nu_n(k)s_2(n,k)&=\binom{n}{k}\frac{4^k11^{n-k}}{15^n}.
\end{aligned}
\label{eq:count-arrays}
\end{equation}
The binomial theorem verifies normalization.

\section{The many-state MGF conditions}

For an experiment $E$ and a state $i$, write
\begin{equation}
M_E^i(u,v)=\sum_s E_i(s)
 \left(\frac{E_i(s)}{E_j(s)}\right)^u
 \left(\frac{E_i(s)}{E_k(s)}\right)^v,
\label{eq:mgf}
\end{equation}
where $(j,k)$ is the ordered pair of the other two states. The ordering is
immaterial for the universal quantifiers below. For the pair in
\eqref{eq:experiments}, direct substitution gives the six displayed
expressions in the formal file: $M_P^i(u,v)>M_Q^i(u,v)$ for every
$u,v\geq0$ with $(u,v)\ne(0,0)$, and
\begin{equation}
M_P^i(-a,-b)<M_Q^i(-a,-b)
\quad\text{for }a,b\geq0,\quad a+b<1,\quad (a,b)\ne(0,0).
\label{eq:mgf-cones}
\end{equation}
The proof uses Holder convexity on the positive cone, weighted geometric
concavity on the closed simplex, and a strict interpolation argument on the
open simplex. The boundary cases for states $1$ and $2$ are included in the
strict proof; they reduce to the same two-point inequality after permuting
coordinates.

\begin{proposition}[Verified necessary-condition package]\label{prop:mgf}
The pair in \eqref{eq:experiments} is normalized, strictly positive and
bounded; it is pairwise generic; all six ordered KL inequalities are strict;
and it satisfies both MGF comparisons in \eqref{eq:mgf-cones}. Consequently,
it satisfies every hypothesis in the many-state necessary-condition test.
\end{proposition}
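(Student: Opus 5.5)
The plan is to check each clause of the proposition directly from \eqref{eq:experiments}, leaving the MGF comparisons for last. The routine clauses come first. Normalization, strict positivity and boundedness are immediate: every entry lies in $(0,1)$, so all likelihood ratios $E_i(s)/E_j(s)$ are bounded above and away from zero. Pairwise genericity means that no two states induce identical distributions, and that the two signals give distinct likelihood-ratio vectors for each pair. This reduces to finitely many rational comparisons, for example $P_1\ne P_2$ and $Q_1\ne Q_0$.

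For the six ordered KL inequalities I would prove the strict inequality $\KL(P_i\|P_j)>\KL(Q_i\|Q_j)$ for every ordered pair $i\ne j$. These are two-point expressions in logarithms of rationals. I would clear them by exponentiating and comparing products of rational powers, or by certified rational bounds on $\log 2,\log 3,\log 5,\log 11$. As a sanity check, the KL inequalities are also the first-order limits of the MGF inequalities at the origin along rays, since $\partial_u M^i_E(0,0)=\KL(E_i\|E_j)$. They could therefore be derived from the cone inequalities by differentiation, except that the strict version needs a direct check.

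The positive cone comes next. For $u,v\ge0$ write $M^i_E(u,v)=\sum_s E_i(s)^{1+u+v}E_j(s)^{-u}E_k(s)^{-v}$, which is log-convex in $(u,v)$ by H\"older. Both $M_P^i$ and $M_Q^i$ equal $1$ at the origin. I would restrict to rays $(u,v)=t(\alpha,1-\alpha)$. On each ray I would compare $M_P^i$ with $M_Q^i$ as functions of $t$, using the inequality at small $t$ (from the KL/derivative comparison) and the asymptotics at large $t$, where the maximal ratio term dominates and $P$ has the more extreme ratios ($8/5$ and $2/5$ against $6/5$ and $8/15$). Because each $M$ is a sum of only two exponentials in $t$, the sign of $M_P^i-M_Q^i$ can change only finitely many times. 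I would exclude a crossing with an explicit two-point inequality. The natural route is to use H\"older's inequality to bound $M_P^i$ below by a single dominating exponential, and to bound $M_Q^i$ above.

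For the negative cone I would set $(u,v)=(-a,-b)$ with $a+b<1$. Then $M^i_E(-a,-b)=\sum_s E_i(s)^{1-a-b}E_j(s)^aE_k(s)^b$ is a weighted geometric-mean (R\'enyi/Hellinger-type) affinity, jointly concave in the closed simplex of weights. I would show that $M_Q^i-M_P^i$ is positive on the open region. It vanishes at the origin, and its boundary behaviour on the edges $a=0$, $b=0$ and $a+b=1$ reduces to two-state Hellinger affinities, where $Q$ being less informative gives the strict inequality. I would then interpolate into the interior. Concavity alone does not order two concave functions, so I would pass to the logarithm and use strict interpolation along segments from the vertex.

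I expect this interior negative-cone step to be the main obstacle. The difference of two concave functions has no sign by convexity alone, and the states $1$ and $2$ with their permuted coordinates must each be handled. My first attempt would reduce each case to a one-variable two-point inequality, $x^{\theta}y^{1-\theta}+\dots$, after fixing the direction, exploiting the $P_1\leftrightarrow P_2$ symmetry of $P$. I would check the resulting inequality by monotonicity in the single parameter, with the formal verification confirming the rational endpoint comparisons.
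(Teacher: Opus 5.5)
\textbf{Main gap: the negative cone.} Your negative-cone step fails as designed. The map $(a,b)\mapsto M_E^i(-a,-b)=\sum_sE_i(s)\exp\bigl(-a\log\tfrac{E_i(s)}{E_j(s)}-b\log\tfrac{E_i(s)}{E_k(s)}\bigr)$ is a moment-generating function, so it is log-\emph{convex} in the weights. The joint concavity of $x^{1-a-b}y^az^b$ holds in the distribution arguments $(x,y,z)$ for fixed $(a,b)$, not in $(a,b)$. With the correct curvature, your interpolation from the origin gives $\log M_Q^i(-\tau a,-\tau b)\le\tau\log M_Q^i(-a,-b)$. The tangent at the origin gives $\log M_P^i(-\tau a,-\tau b)\ge-\tau\bigl(a\KL(P_i\|P_j)+b\KL(P_i\|P_k)\bigr)$. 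These are an upper bound on the side that must be large and a lower bound on the side that must be small. So neither the edge comparisons nor the KL slopes transfer into the open simplex.

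This is structural, not technical. The three edges carry only pairwise information; here every two-state restriction of $Q$ is even a garbling of the corresponding restriction of $P$. The interior values, by contrast, involve all three states jointly. In addition, on $a+b=1$ both sides equal $1$ at the corners, so that edge comparison is not strict there.

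The paper instead uses the mixture $Q=\tfrac13P+\tfrac23R$ with $R=(U,U,V)$ and $V=(7/10,3/10)$. Concavity in the distribution arguments gives $G_{a,b}(Q_i,Q_j,Q_k)\ge\tfrac13G_{a,b}(P_i,P_j,P_k)+\tfrac23G_{a,b}(R_i,R_j,R_k)$, which reduces everything to $G_{a,b}(P)\le G_{a,b}(R)$. For state $0$ the $R$-side is $\sum_sU_s^{1-b}V_s^b$, which does not depend on $a$. The convexity of $G_{a,b}(U,A,B)$ in $a$, through its secant on $[0,1-b]$, therefore points the right way. The endpoints follow from $U=(A+B)/2$, $V=(A+5B)/6$ and strict concavity of $x^b$. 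States $1$ and $2$ follow by reindexing.

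\textbf{Positive cone and genericity.} Your positive-cone plan does not close either. Finitely many sign changes do not exclude crossings. For $i=0$ along $(u,v)=t(0.55,0.45)$, the four exponential rates of $M_P^0-M_Q^0$, in increasing order, carry coefficients of alternating sign. Descartes' rule for exponential sums therefore still permits two positive roots, despite the correct behaviour at $t=0$ and as $t\to\infty$.

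The device you name cannot supply the missing inequality. No single exponential $e^{\ell(u,v)}$ fits between $M_Q^0$ and $M_P^0$. Along the $v$-axis it would need rate at most $\KL(P_0\|P_2)=\log(5/4)$ near the origin, and at least $\log(15/8)$ to dominate $M_Q^0$ at infinity.

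What works is an exact one-variable rewrite of $M_P^i$, paired with an upper bound on $M_Q^i$ in the same variable. For states $1$ and $2$, factor $c=(2/5)^u(1/4)^v$, so that $M_P^i=c\{\tfrac15+\tfrac45t\}$ with $t=4^u16^v$. For state $0$, use $m=\min(u,v)$ and $r=|u-v|$, reducing to (A.2)--(A.3). The same device even handles state $2$ without the paper's mixture: $M_Q^2\le c\{\tfrac{11}{15}t+\tfrac4{15}t^\beta\}$ with $\beta=\log_4(4/3)$, and $4t^\beta<3+t$ for $t>1$ because $4\beta<1$.

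Finally, pairwise genericity in Mu et al.\ is not distinctness of the states' distributions or of the signals' likelihood vectors. It requires, for each ordered pair of states, that the maximum and the minimum likelihood ratio under $P$ differ from those under $Q$. This is what the paper's extrema table checks, for example $[1/4,4]$ versus $[6/11,9/4]$ for the pair $(1,2)$.
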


\begin{proof}
The rational checks for normalization, positivity, ratio bounds, and KL order
are finite arithmetic. Pairwise genericity is checked both for ratio extrema
and, using strict monotonicity of the logarithm (so the log of a finite
positive maximum/minimum is the corresponding maximum/minimum of logs), for
the log-extrema appearing in the conjecture. The six MGF identities and the continuum inequalities are
proved by the elementary inequalities just described. The complete derivation
is checked by Lean 4.30 in \texttt{MuMGFNegative.lean}, whose package theorem
is \texttt{mu\_appendixK\_premises}.\end{proof}

\section{Main theorem and conjecture resolution}

The positive result is most transparent in the square finite case: the number
of states equals the number of source signals.

\begin{theorem}[Finite simplex Blackwell iff]\label{thm:main-simplex}
Let $I=\{0,\ldots,m-1\}$ be the state space and
$X=\{x_0,\ldots,x_{m-1}\}$ the source signal space. Assume
$P_0(x_j)>0$ and $Q_0(y)>0$. Define
\[
r(x_j)=\left(\frac{P_i(x_j)}{P_0(x_j)}\right)_{i=1}^{m-1},
\qquad
s(y)=\left(\frac{Q_i(y)}{Q_0(y)}\right)_{i=1}^{m-1}.
\]
If $r(x_0),\ldots,r(x_{m-1})$ are affinely independent, let
$\lambda_j(z)$ be their unique barycentric coordinates. Then
\[
P\succeqB Q
\quad\Longleftrightarrow\quad
\text{for every target signal }y\text{ and every }j,
\quad \lambda_j(s(y))\geq 0.
\]
\end{theorem}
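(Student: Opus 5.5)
The plan is to rewrite the garbling equation \eqref{eq:garbling} in likelihood-ratio coordinates relative to state $0$. For each target signal $y$ I will read off a vector of nonnegative weights on the source signals. These weights are barycentric coordinates of $s(y)$, so affine independence forces them to equal $\lambda_j(s(y))$. For the converse I reverse the construction, and the row-stochasticity of the resulting kernel will come from the affineness of $\lambda_j$.

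\emph{Necessity.} Suppose $K$ is a row-stochastic kernel with $Q_i(y)=\sum_j P_i(x_j)K(x_j,y)$ for all $i$. Fix $y$ with $Q_0(y)>0$ and set
\[
w_j(y)=\frac{P_0(x_j)K(x_j,y)}{Q_0(y)}\ \geq 0 .
\]
The equation for $i=0$ gives $\sum_j w_j(y)=1$. Dividing the equation for $i\geq1$ by $Q_0(y)$ and writing $P_i(x_j)=P_0(x_j)\,r_i(x_j)$ gives $s(y)=\sum_j w_j(y)\,r(x_j)$. So $w(y)$ is a vector of barycentric coordinates of $s(y)$ with respect to $r(x_0),\dots,r(x_{m-1})$. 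Because these $m$ points in $\R^{m-1}$ are affinely independent, barycentric coordinates are unique. Hence $\lambda_j(s(y))=w_j(y)\geq0$ for every $j$.

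\emph{Sufficiency.} Assume $\lambda_j(s(y))\geq0$ for all $y$ and $j$, and define
\[
K(x_j,y)=\frac{Q_0(y)\,\lambda_j(s(y))}{P_0(x_j)}\ \geq 0 .
\]
The garbling identity is immediate from $\sum_j\lambda_j(z)=1$ and $\sum_j\lambda_j(z)r(x_j)=z$:
\[
\sum_j P_i(x_j)K(x_j,y)=Q_0(y)\sum_j\lambda_j(s(y))\,r_i(x_j)=Q_0(y)\,s_i(y)=Q_i(y),
\]
and for $i=0$ it gives $Q_0(y)$ directly.

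The step I expect to need the most care is showing that each row of $K$ sums to one. Since $\lambda_j$ is affine, write $\lambda_j(z)=a_j+b_j\cdot z$. Then
\[
\sum_y Q_0(y)\lambda_j(s(y))=a_j\sum_y Q_0(y)+b_j\cdot\Bigl(\sum_y Q_i(y)\Bigr)_{i=1}^{m-1}=a_j+b_j\cdot\mathbf 1=\lambda_j(\mathbf 1).
\]
If the target space is countable, this interchange is justified because each sum $\sum_y Q_i(y)$ converges absolutely and there are only finitely many coordinates. On the other side, $\sum_j P_0(x_j)\,r(x_j)=\bigl(\sum_j P_i(x_j)\bigr)_{i}=\mathbf 1$ with weights $P_0(x_j)$ summing to one. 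By uniqueness, $\lambda_j(\mathbf 1)=P_0(x_j)$. Therefore $\sum_y K(x_j,y)=\lambda_j(\mathbf 1)/P_0(x_j)=1$, so $K$ is row-stochastic and $P\succeqB Q$.
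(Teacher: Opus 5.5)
Your proposal is correct and follows the paper's proof essentially step for step. Necessity uses the same weights $P_0(x_j)K(x_j,y)/Q_0(y)$, identified with $\lambda_j(s(y))$ by uniqueness, and sufficiency uses the same kernel $K(x_j,y)=Q_0(y)\lambda_j(s(y))/P_0(x_j)$. Your explicit affine form $\lambda_j(z)=a_j+b_j\cdot z$, giving $\sum_y Q_0(y)\lambda_j(s(y))=\lambda_j(\mathbf 1)=P_0(x_j)$, just spells out the row-sum step that the paper compresses into an appeal to uniqueness of barycentric coordinates.
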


\begin{proof}
If a garbling $K$ exists, define
\[
\widehat\lambda_j(y)=\frac{P_0(x_j)K(x_j,y)}{Q_0(y)}.
\]
The garbling equations imply that these numbers sum to one and that
$s(y)=\sum_j\widehat\lambda_j(y)r(x_j)$. Affine independence makes the
representation unique, so $\widehat\lambda_j(y)=\lambda_j(s(y))\geq0$.

Conversely, assume the displayed universal inequalities and define
\[
K(x_j,y)=\frac{Q_0(y)\lambda_j(s(y))}{P_0(x_j)}.
\]
The reference-state identities imply
$\sum_yQ_0(y)s(y)=\sum_jP_0(x_j)r(x_j)=(1,\ldots,1)$.
Uniqueness of barycentric coordinates therefore gives
$\sum_yQ_0(y)\lambda_j(s(y))=P_0(x_j)$ for every $j$, so $K$ is
row-stochastic. The barycentric identity then yields
$\sum_jP_i(x_j)K(x_j,y)=Q_i(y)$ for every state $i$ and target signal $y$.
\end{proof}

\begin{remark}[Scope of the square restriction]
Here ``square'' refers to the source experiment: the number of source signals
equals the number of states. The target may have any finite number of signals.
The same barycentric argument extends to a lower-dimensional simplex with
$m\leq |I|$ source signals, provided the source ratio vectors are affinely
independent and every target ratio vector lies in their affine hull. We use
the square case as the headline because it is full-dimensional and requires
no additional affine-hull qualification. Source experiments with redundant
interior signal types, and the noisy-audit, common-noise, and modular classes,
are separate restricted cases treated in the appendices.
\end{remark}

\begin{theorem}[Resolution of Mu's universal sufficiency conjecture]\label{thm:mu-resolution}
For the three-state, two-signal pair in \eqref{eq:experiments}, all of Mu's
necessary conditions hold: strict positivity and boundedness, pairwise
genericity, the continuum positive and negative MGF inequalities, and all
ordered KL inequalities. Nevertheless,
\[
P^{\otimes n}\not\succeqB Q^{\otimes n}
\qquad\text{for every integer }n\geq1.
\]
Hence Mu's sufficiency conjecture is false as a universal statement over
finite experiments. The result is a counterexample with three states and two
signals per period; it does not assert failure for every larger state count or
every other signal cardinality.
\end{theorem}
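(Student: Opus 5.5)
The plan is in two parts. The necessary-condition half of the statement is exactly Proposition~\ref{prop:mgf}, so I would cite it and not reprove anything there. The substance is the all-$n$ non-dominance. I would prove it by contradiction with a single target atom. Fix $n\geq1$ and suppose a row-stochastic kernel $K$ on words satisfies \eqref{eq:garbling} for $P^{\otimes n}$ and $Q^{\otimes n}$.

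\textbf{Step 1 (reduce to a convex-combination identity).} Take the target boundary word $y^\ast$, the all-second-coordinate word, and set $\widehat\lambda(x)=P_0^{\otimes n}(x)K(x,y^\ast)/Q_0^{\otimes n}(y^\ast)$. Its denominator is $2^{-n}>0$. Exactly as in the necessity half of Theorem~\ref{thm:main-simplex}, the garbling equations in states $0,1,2$ give two facts: the weights $\widehat\lambda(x)\ge 0$ sum to one, and
\[
s(n)=\bigl(s_1(n),s_2(n)\bigr)=\sum_x \widehat\lambda(x)\,\bigl(r_1(n,m(x)),r_2(n,m(x))\bigr).
\]
Thus the target ratio vector is a convex combination of source ratio vectors. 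No affine independence is needed for this direction.

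\textbf{Step 2 (equality-face geometry).} From \eqref{eq:source-ratios}, every source word satisfies $r_1(n,m)r_2(n,m)=16^n/25^n$. From \eqref{eq:target-ratios}, the target word satisfies $s_1(n)s_2(n)=(48/75)^n=(16/25)^n$ as well. So every source point and the target point lie on the same hyperbola $\{z\in\R_{>0}^2: z_1z_2=c\}$ with $c=(16/25)^n$. The map $g(z)=\sqrt{z_1z_2}$ is concave and positively homogeneous on the positive orthant. By Jensen, $g(s)\ge\sum_x\widehat\lambda(x)g(r(x))=\sqrt c$, and the left side equals $\sqrt c$, so equality holds. Equality in the AM--GM/Cauchy--Schwarz form of this concavity forces every $r(x)$ with $\widehat\lambda(x)>0$ to be proportional to $s$. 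A point on the hyperbola that is proportional to $s$ must equal $s$. Hence every source word in the support of $\widehat\lambda$ has $r(x)=s(n)$. This strict-equality step is the one I expect to need the most care: I would state the equality case explicitly, namely $\sqrt{\sum a_i}\sqrt{\sum b_i}\ge\sum\sqrt{a_ib_i}$ with equality iff $(a_i)\parallel(b_i)$, applied with $a_i=\widehat\lambda_i r_1$ and $b_i=\widehat\lambda_i r_2$.

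\textbf{Step 3 (absent source atom).} The support of $\widehat\lambda$ is nonempty because the weights sum to one. So some count $m\in\{0,\dots,n\}$ must satisfy $8^m2^{n-m}/5^n=(6/5)^n$, that is, $2^{n+2m}=6^n$. For $n\ge1$ the right side is divisible by $3$ and the left is not, a contradiction. Therefore no $K$ exists, and $P^{\otimes n}\not\succeqB Q^{\otimes n}$ for every $n\ge1$.

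Combined with Proposition~\ref{prop:mgf}, this shows that the stated conditions do not imply eventual dominance, so the conjecture fails as a universal claim. The last sentence of the theorem is only a scope disclaimer and needs no proof.
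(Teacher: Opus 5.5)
Your proposal is correct and follows the paper's route. You cite Proposition~\ref{prop:mgf}, then derive a contradiction at the all-second-coordinate target word by the equality-face argument and a prime-factor check, as in Theorem~\ref{thm:nondom} with Lemma~\ref{lem:obstruction}. The differences are cosmetic: your Jensen step for $\sqrt{z_1z_2}$ is the paper's zero-slack argument, since $L(x)=s_2r_1(x)+s_1r_2(x)$ is a positive multiple of the tangent functional of $\sqrt{z_1z_2}$ at $s$; you skip the harmless pushforward to counts because the ratios depend only on $m$; and you use $2^{n+2m}=6^n$ in place of $4^{2m}=9^n$.
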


\begin{proof}
The analytic and finite checks for the necessary-condition package are
Proposition~\ref{prop:mgf}. The all-sample-size non-domination statement is
Theorem~\ref{thm:nondom}, proved in Appendix~B by the forced equality-face
argument and the prime-factor obstruction.
\end{proof}

\appendix

\section{Boundary couplings and equality faces}

Let $\mu$ and $\nu$ be nonnegative masses on finite source and target spaces. Let $B$ be a set of target boundary atoms, and let $f:B\to X$ prescribe the only source atom that may feed each boundary atom. Define
\begin{equation}
b(x)=\sum_{y\in B:f(y)=x}\nu(y),\qquad
\mu^r(x)=\mu(x)-b(x),
\end{equation}
and set $\nu^r(y)=0$ on $B$ and $\nu^r(y)=\nu(y)$ off $B$.

\begin{theorem}[Conditional boundary-coupling iff]\label{thm:iff}
Assume nonnegativity and that every admissible coupling $\gamma$ satisfies
\begin{equation}
y\in B\text{ and }\gamma(x,y)>0\quad\Longrightarrow\quad x=f(y).
\label{eq:forced-support}
\end{equation}
Then a coupling with marginals $(\mu,\nu)$ exists if and only if
\begin{enumerate}
\item $\mu(f(y))\geq \nu(y)$ for every $y\in B$; and
\item the residual masses $(\mu^r,\nu^r)$ admit a coupling.
\end{enumerate}
\end{theorem}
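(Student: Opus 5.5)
We prove the two directions separately. Throughout, a coupling of $(\mu,\nu)$ means a nonnegative array $\gamma$ on $X\times Y$ with row sums $\mu$ and column sums $\nu$. The hypothesis \eqref{eq:forced-support} is used only in the forward direction.

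\emph{Necessity.} Suppose $\gamma$ is a coupling of $(\mu,\nu)$.

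1. Fix a boundary atom $y\in B$. By \eqref{eq:forced-support}, the column of $\gamma$ at $y$ is supported on the single source atom $f(y)$. Hence $\gamma(f(y),y)=\nu(y)$.

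2. Fix a source atom $x$. Summing the previous identity over $y\in B$ with $f(y)=x$ shows that row $x$ of $\gamma$ carries mass $b(x)$ on the boundary columns. Since $\gamma\ge 0$, this gives
\[
\mu(x)\ \ge\ b(x)\ \ge\ \nu(y)\qquad\text{for every } y\in B \text{ with } f(y)=x,
\]
so condition~1 holds. In particular $\mu^r\ge 0$.

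3. Define $\gamma^r$ by zeroing out the boundary columns: $\gamma^r(x,y)=\gamma(x,y)$ for $y\notin B$ and $\gamma^r(x,y)=0$ for $y\in B$. Then $\gamma^r\ge 0$, its column sums are $\nu^r$, and its row sums are $\mu(x)-b(x)=\mu^r(x)$. So $\gamma^r$ couples $(\mu^r,\nu^r)$, which is condition~2.

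\emph{Sufficiency.} Suppose conditions~1 and~2 hold, and let $\gamma^r$ be a coupling of $(\mu^r,\nu^r)$. Set
\[
\gamma=\gamma^r+\sum_{y\in B}\nu(y)\,\delta_{(f(y),y)}.
\]
Then $\gamma\ge 0$. For $y\in B$, $\gamma^r$ has zero column sum and is nonnegative, so that column vanishes. The column sum of $\gamma$ at $y\in B$ is therefore $\nu(y)$, and off $B$ it is $\nu^r(y)=\nu(y)$. The row sum at $x$ is $\mu^r(x)+b(x)=\mu(x)$. So $\gamma$ couples $(\mu,\nu)$.

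This direction does not use the forced-support hypothesis. It needs $\mu^r\ge 0$, and that is implicit in condition~2: a coupling of $(\mu^r,\nu^r)$ can only exist if $\mu^r$ is a nonnegative mass.

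\emph{The delicate point: condition~1 versus $\mu^r\ge 0$.} As stated, condition~1 compares $\mu(f(y))$ with a single $\nu(y)$. When several boundary atoms share the same source atom, the correct aggregated requirement is $\mu(x)\ge b(x)$. I would handle this in one of two equivalent ways:
\begin{itemize}
\item read condition~2 as presupposing $\mu^r\ge 0$, so that the aggregated inequality is part of condition~2; or
\item note that in the applications $f$ is injective on $B$, so that $b(f(y))=\nu(y)$ and condition~1 is exactly $\mu^r\ge 0$ at the relevant atoms.
\end{itemize}
In either reading the argument above goes through. Apart from this bookkeeping, both directions are routine.
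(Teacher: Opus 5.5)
Your proof is correct and follows the paper's argument. The forced-support hypothesis puts each boundary column entirely on $f(y)$, deleting those columns gives the residual coupling, and adding the forced entries back to any residual coupling restores the marginals. Your caveat about non-injective $f$ needs no special reading: any nonnegative coupling of $(\mu^r,\nu^r)$ already forces $\mu^r\ge 0$, i.e.\ $\mu(x)\ge b(x)$, so condition~1 actually follows from condition~2.
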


\begin{proof}
Any feasible coupling must allocate the full column $\nu(y)$ to $f(y)$, which gives the capacity inequalities. Subtracting those forced entries yields a residual coupling, proving necessity. Conversely, add the forced entries to any residual coupling. The capacity inequalities preserve nonnegativity, and the original marginals are restored.\end{proof}

The theorem is a finite transport statement under a support certificate. It is not an unconditional characterization of Blackwell dominance.

The forced-support statement is a certificate, not a continuity assumption.
It says that a target boundary column is an exposed equality face of the
likelihood-ratio moment equations: any admissible coupling placing positive
mass in that column must use the prescribed source atom. In the finite model
this follows by exhibiting an affine functional $L$ with
$L(x)\ge c$ on the source support and showing that the target column has
moment exactly $c$. The zero-slack argument forces $L(x)=c$ on every
source atom used by that column; a separate calculation of the equality set
identifies that set with $f(y)$. Lean verifies the implication as
\texttt{linear\_boundary\_forced}.

In the displayed example, the square-root functional
$L(x)=s_2r_1(x)+s_1r_2(x)$ gives exactly this equality argument. The
prime-factor calculation then shows that the required coordinatewise match
does not exist, so the boundary atom cannot be supplied. The
forcing step is proved, while the matching step fails; neither is imposed as
an empirical regularity condition.

The certificate has a direct monitoring interpretation: an extreme audit
outcome has a posterior-likelihood signature that cannot be synthesized by
mixing other source outcomes while preserving each statewise likelihood
moment. Its source type is therefore scarce, and the capacity
inequality is a resource constraint. The condition is meaningful in an
application, but it is not standard in the many-state
large-sample literature and should not be presented as automatic.
It is acceptable as an explicit, checkable hypothesis in a boundary theorem;
it is not a substitute for a general continuity theorem.

The certificate is not necessary for Mu et al.'s large-sample Blackwell
order. Their conjecture does not require a boundary set, a map $f$, or a
unique equality source. It becomes necessary only conditionally:
if a target column lies on an exposed equality face containing a
single source likelihood-ratio point: every feasible martingale coupling
must satisfy the forced-support restriction. Experiments whose likelihood
geometry has no such exposed equality face need not satisfy a forced-support
statement. The certificate is therefore a technical tool for this boundary
mechanism, not a general implication of eventual dominance.

There is, however, an exact finite characterization before the boundary
deletion step. Fix a reference state $i_0$ and assume
$P_{i_0}(x),Q_{i_0}(y)>0$. Put
$r_i(x)=P_i(x)/P_{i_0}(x)$ and
$s_i(y)=Q_i(y)/Q_{i_0}(y)$. Then
\begin{equation}
\exists K\;(Q_i=P_iK\ \forall i)
\quad\Longleftrightarrow\quad
\exists\gamma\;\left\{
\begin{array}{l}
\gamma\text{ has marginals }(P_{i_0},Q_{i_0}),\\
\displaystyle\sum_x\gamma(x,y)r_i(x)=Q_{i_0}(y)s_i(y)\quad\forall i,y.
\end{array}\right.
\label{eq:full-iff}
\end{equation}
The forward map is $\gamma(x,y)=P_{i_0}(x)K(x,y)$; the reverse map is
$K(x,y)=\gamma(x,y)/P_{i_0}(x)$. Thus the converse is exact and uses no
strict-separation assumption. The source file
\texttt{FiniteBlackwell.lean} verifies \eqref{eq:full-iff} and its exact
residual boundary version, in which the support certificate is imposed only
on martingale couplings and all likelihood-ratio moments are preserved after
deleting the boundary columns.

The finite coupling equivalence is a proof device, not an additional
primitive condition that repairs the conjecture. Requiring these couplings for
all sufficiently large sample sizes is equivalent to requiring eventual
Blackwell dominance itself. Adding the three necessary MGF/KL conditions to
that requirement yields no new characterization.

A canonical fourth condition can be written using only
universal quantifiers. Fix a reference state $i_0$ and, for a product signal
$x$ or $y$, write
\begin{equation}
 R_n(x)=\left(\frac{P_i^{\otimes n}(x)}{P_{i_0}^{\otimes n}(x)}\right)_{i\ne i_0},
 \qquad
 S_n(y)=\left(\frac{Q_i^{\otimes n}(y)}{Q_{i_0}^{\otimes n}(y)}\right)_{i\ne i_0}.
\label{eq:ratio-vectors}
\end{equation}
For a fixed tail threshold $N$, impose
\begin{equation}
\tag{$\mathrm{C4}_N$}
\text{for every }n\ge N\text{ and every finite convex }\Phi:\R^{|I|-1}\to\R,
\qquad
\mathbb E_{P_{i_0}^{\otimes n}}[\Phi(R_n)]
\geq
\mathbb E_{Q_{i_0}^{\otimes n}}[\Phi(S_n)].
\label{eq:universal-c4}
\end{equation}
The two ratio vectors have the same mean, so \eqref{eq:universal-c4} is a
finite-dimensional convex-order statement. The finite Strassen theorem,
combined with \eqref{eq:full-iff}, gives
\begin{equation}
\begin{aligned}
 P^{\otimes n}\succeqB Q^{\otimes n}
 &\Longleftrightarrow \mathrm{C4}_n,\\
 \exists N\ \forall n\ge N:\ P^{\otimes n}\succeqB Q^{\otimes n}
 &\Longleftrightarrow
 \exists N\ \forall n\ge N:\bigl[\mathrm{C4}_n\ \text{and Mu's conditions (1)--(3)}\bigr].
\end{aligned}
\label{eq:canonical-c4-iff}
\end{equation}
On finite signal spaces, the same condition has the more
economic form
\begin{equation}
\text{for every }n\ge N,\ m\ge1,\ a_1,\ldots,a_m\in\R^{|I|-1},\ b_1,\ldots,b_m\in\R,
\quad
\mathbb E_{P_{i_0}^{\otimes n}}
 \!\left[\max_{\ell\le m}\{a_\ell\!\cdot R_n+b_\ell\}\right]
\geq
\mathbb E_{Q_{i_0}^{\otimes n}}
 \!\left[\max_{\ell\le m}\{a_\ell\!\cdot S_n+b_\ell\}\right].
\label{eq:decision-c4}
\end{equation}
Every finite-action problem has an indirect payoff of this max-of-affine form
after factoring out the reference-state likelihood. Conversely, on finite
supports, finite max-of-affine tests generate the convex-order inequalities.
Thus \eqref{eq:decision-c4} says directly that every finite decision problem
has weakly higher value under the source experiment. This is the standard
Blackwell value interpretation. It is not an independent
repair of Mu's conjecture: \eqref{eq:canonical-c4-iff} shows that condition
($\mathrm{C4}$) already contains the Blackwell conclusion, while the MGF and
KL conditions are then redundant necessary consequences. We therefore state
it as a canonical reformulation, not as a fourth primitive assumption or a
new contribution. Replacing $n\ge N$ by $n\ge1$ gives the stronger all-sample
version, not Mu's eventual-order conjecture.

The canonical convex-order condition above is exact but equivalent to the
Blackwell conclusion. The finite theorem and the counterexample are our
general results. Restricted noisy-audit, common-noise, and modular
completions appear in Appendix~A; they are not assumptions in the
unrestricted theorem.

For reference, the finite simplex theorem is recorded again in its
likelihood-ratio notation. Suppose
there are $k+1$ states and exactly $k+1$ source signals
$X=\{x_0,\ldots,x_k\}$. Fix a reference state $0$ and define
\[
r(x_j)=\left(\frac{P_i(x_j)}{P_0(x_j)}\right)_{i=1}^k,\qquad
s(y)=\left(\frac{Q_i(y)}{Q_0(y)}\right)_{i=1}^k.
\]
Assume $P_0(x_j)>0$, $Q_0(y)>0$, and that the $k+1$ vectors
$r(x_0),\ldots,r(x_k)$ are affinely independent. Let
$\lambda_0,\ldots,\lambda_k$ denote their affine barycentric coordinate
functions, so that for every $z\in\mathbb R^k$,
\[
\sum_{j=0}^k\lambda_j(z)=1,\qquad
z=\sum_{j=0}^k\lambda_j(z)r(x_j).
\]
The fourth condition is a universal primitive condition
\begin{equation}
\tag{$\mathrm{C4}^{\triangle}$}
\text{for every target signal }y\text{ and every }j,\qquad
\lambda_j(s(y))\geq0.
\label{eq:simplex-c4}
\end{equation}
It says that each target posterior-likelihood profile lies in the source
posterior simplex. Thus the restriction has a direct verbal form: the source
has as many signal types as states, and those types are
affinely independent posterior types.

\begin{theorem}[Finite simplex completion]
\label{thm:simplex-completion}
Under the simplex assumptions above,
\[
P\succeqB Q
\quad\Longleftrightarrow\quad
\eqref{eq:simplex-c4}.
\]
\end{theorem}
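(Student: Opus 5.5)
This is Theorem~\ref{thm:main-simplex} restated in the $k+1$-state notation, so I will run the same barycentric argument through the exact coupling equivalence \eqref{eq:full-iff} with reference state $i_0=0$. By \eqref{eq:full-iff}, $P\succeqB Q$ holds exactly when there is a nonnegative $\gamma$ on $X\times Y$ with marginals $(P_0,Q_0)$ and column moments $\sum_j\gamma(x_j,y)r(x_j)=Q_0(y)s(y)$ for every target signal $y$. Both directions reduce to reading each column of such a $\gamma$, divided by $Q_0(y)>0$, as a probability vector on the $k+1$ source types.

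\emph{Necessity.} Take a garbling $K$ and set $\gamma(x_j,y)=P_0(x_j)K(x_j,y)$. Define $\widehat\lambda_j(y)=\gamma(x_j,y)/Q_0(y)$. These weights are nonnegative. They sum to one because the column marginal is $Q_0(y)$. The moment equations give $s(y)=\sum_j\widehat\lambda_j(y)r(x_j)$. Affine independence of $r(x_0),\ldots,r(x_k)$ makes barycentric coordinates unique, so $\widehat\lambda_j(y)=\lambda_j(s(y))$. Hence $\lambda_j(s(y))\ge0$, which is \eqref{eq:simplex-c4}.

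\emph{Sufficiency.} Assume \eqref{eq:simplex-c4} and define $\gamma(x_j,y)=Q_0(y)\lambda_j(s(y))$. This is nonnegative by hypothesis. Its column sums equal $Q_0(y)$ because $\sum_j\lambda_j\equiv1$. Its column moments equal $Q_0(y)s(y)$ by the barycentric identity. The one step that needs an argument is the row marginal $\sum_y Q_0(y)\lambda_j(s(y))=P_0(x_j)$, and I expect this to be the only real obstacle.

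To get the row marginal, I use that each $\lambda_j$ is an affine function on $\R^k$, so it commutes with finite convex combinations. Then
\[
\sum_y Q_0(y)\lambda_j(s(y))=\lambda_j\Bigl(\sum_yQ_0(y)s(y)\Bigr),
\]
since the weights $Q_0(y)$ sum to one. Next I need $\sum_yQ_0(y)s(y)=\sum_yQ_i(y)$ componentwise, which equals $\mathbf 1$, and likewise $\sum_jP_0(x_j)r(x_j)=\mathbf 1$. So both $(P_0(x_j))_j$ and $(\lambda_j(\mathbf 1))_j$ are barycentric weights representing the same point $\mathbf 1$. By uniqueness, $\lambda_j(\mathbf 1)=P_0(x_j)$.

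This establishes the row marginal. The reverse map in \eqref{eq:full-iff}, $K(x_j,y)=\gamma(x_j,y)/P_0(x_j)$, then yields a row-stochastic garbling with $Q_i=P_iK$ for every state $i$. No other estimates are needed; everything is finite linear algebra.
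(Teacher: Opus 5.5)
Your proposal is correct and follows essentially the same argument as the paper. Necessity uses $\widehat\lambda_j(y)=P_0(x_j)K(x_j,y)/Q_0(y)$ together with uniqueness of barycentric coordinates. Sufficiency uses the same kernel $K(x_j,y)=Q_0(y)\lambda_j(s(y))/P_0(x_j)$, with the row sums obtained from $\sum_yQ_0(y)s(y)=\sum_jP_0(x_j)r(x_j)=(1,\ldots,1)$ and uniqueness. Routing through the coupling equivalence \eqref{eq:full-iff}, and phrasing the row-marginal step as affinity of $\lambda_j$, only repackages the paper's direct kernel computation.
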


\begin{proof}
If a garbling K exists, set
\[
\widehat\lambda_j(y)=\frac{P_0(x_j)K(x_j,y)}{Q_0(y)}.
\]
The row and column equations imply
$\sum_j\widehat\lambda_j(y)=1$ and
$s(y)=\sum_j\widehat\lambda_j(y)r(x_j)$.
Affine independence makes barycentric coordinates unique, so
$\widehat\lambda_j(y)=\lambda_j(s(y))\geq0$.

Conversely, assume \eqref{eq:simplex-c4} and define
\[
K(x_j,y)=\frac{Q_0(y)\lambda_j(s(y))}{P_0(x_j)}.
\]
The equality of the reference-state means gives
\[
\sum_yQ_0(y)s(y)=\sum_jP_0(x_j)r(x_j)=(1,\ldots,1).
\]
Using the barycentric expansion of every $s(y)$ and affine independence,
we obtain $\sum_yQ_0(y)\lambda_j(s(y))=P_0(x_j)$ for every $j$.
Thus K is row-stochastic. Finally,
\[
\sum_jP_i(x_j)K(x_j,y)
=Q_0(y)\sum_j\lambda_j(s(y))r_i(x_j)=Q_i(y),
\]
so K is a garbling.
\end{proof}

Define the tail condition
\begin{equation}
\tag{$\mathrm{C4}^{\triangle}_{\mathrm{tail}}$}
\text{for every }n\geq N,\text{ the source has exactly }k+1\text{ signal atoms at affinely independent simplex vertices, and for every target }y\text{ and every vertex }j,\quad
\lambda_{n,j}(s_n(y))\geq0.
\label{eq:simplex-tail-c4}
\end{equation}
For any product family satisfying this simplicial-tail model restriction,
\begin{corollary}[Simplicial-tail completion]
\label{cor:simplex-tail}
\[
\text{Mu (1)--(3)}\ \land\ \mathrm{C4}^{\triangle}_{\mathrm{tail}}
\quad\Longleftrightarrow\quad
\forall n\geq N: P^{\otimes n}\succeqB Q^{\otimes n}.
\]
\end{corollary}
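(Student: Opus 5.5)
The plan is to apply Theorem~\ref{thm:simplex-completion} sample by sample on the tail and then dispose of Mu's conditions (1)--(3) as redundant. Fix $n\ge N$. Under the simplicial-tail model restriction, the product source $P^{\otimes n}$ has exactly $k+1$ signal atoms (after merging product words with identical likelihood-ratio vectors, which does not change the Blackwell class). Their ratio vectors $R_n$ sit at affinely independent vertices, so the hypotheses of Theorem~\ref{thm:simplex-completion} hold for the pair $(P^{\otimes n},Q^{\otimes n})$, with barycentric coordinates $\lambda_{n,j}$. We also need positivity of the reference-state masses. This is inherited from $P_0(x)>0$ and $Q_0(y)>0$ in each period, because product masses are products of positive numbers. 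Theorem~\ref{thm:simplex-completion} then gives, for each $n\ge N$,
\[
P^{\otimes n}\succeqB Q^{\otimes n}\quad\Longleftrightarrow\quad \lambda_{n,j}(s_n(y))\ge 0\ \text{for all }y,j.
\]

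For the direction ``$\Leftarrow$'' of the corollary, assume Mu (1)--(3) together with $\mathrm{C4}^{\triangle}_{\mathrm{tail}}$. The second conjunct alone gives the right-hand inequalities for every $n\ge N$, so the display yields $P^{\otimes n}\succeqB Q^{\otimes n}$ for all $n\ge N$. For ``$\Rightarrow$'', assume dominance for every $n\ge N$. The display, read from left to right, gives the barycentric inequalities for each $n\ge N$. The structural part of $\mathrm{C4}^{\triangle}_{\mathrm{tail}}$ (exactly $k+1$ affinely independent atoms) is a standing model restriction on the product family, so it holds by assumption. It remains to derive Mu's conditions (1)--(3) from eventual dominance. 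These are the necessary conditions of Mu et al.: the positive and negative MGF comparisons and the ordered KL inequalities. Each is a statement about expectations of convex, or respectively concave, functionals of the likelihood ratios. Blackwell dominance at $n$ gives the convex-order inequalities $\mathrm{C4}_n$ via \eqref{eq:canonical-c4-iff}. Multiplicativity of MGFs under tensor products ($M_{E^{\otimes n}}=M_E^n$) then transfers the weak inequality at large $n$ back to the one-period comparison. This is the same route as the second line of \eqref{eq:canonical-c4-iff}.

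The main obstacle is this last step. Mu's conditions are stated with \emph{strict} inequalities and genericity, while dominance only yields weak versions. I would handle it by the following route. First, cite the necessity direction of Mu et al.\ as it is used in \eqref{eq:canonical-c4-iff}, which the paper already takes as given there. Failing that, interpret ``Mu (1)--(3)'' in the corollary in the form in which \eqref{eq:canonical-c4-iff} asserts it is necessary for eventual dominance, so the equivalence reduces to the same redundancy already recorded there. The remaining content of the proof is the purely finite per-$n$ application of Theorem~\ref{thm:simplex-completion}. A secondary point to check is that merging duplicate product atoms preserves both the Blackwell relation and the vertex count. This holds because the merge is a garbling that is invertible up to equivalence: splitting a merged atom back in proportion to the state-$0$ masses recovers an experiment equivalent to the original.
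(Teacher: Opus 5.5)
Your proposal follows the paper's proof. You apply Theorem~\ref{thm:simplex-completion} at each $n\ge N$: the tail barycentric inequalities give dominance, and for the converse the standing simpliciality restriction plus the reverse direction of that theorem give the inequalities, while conditions (1)--(3) come from Mu et al.'s necessity theorem rather than from the weak MGF comparisons. Your explicit merging of product words with equal likelihood-ratio vectors, and your check that product reference masses stay positive, are details the paper leaves implicit, and you handle both correctly.
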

\begin{proof}
For the forward implication, apply the finite simplex theorem at each
$n\geq N$; this gives the garbling at every tail sample size. Conversely,
assume the tail garblings exist. The standing simpliciality restriction lets
us apply the reverse direction of the finite simplex theorem at each
$n\geq N$, yielding $\lambda_{n,j}(s_n(y))\geq0$ for every target atom and
vertex. Mu's necessity result gives conditions (1)--(3).
\end{proof}

The condition has a direct economic interpretation. The source generates a
simplex of posterior types. Each target signal produces a posterior inside
that simplex, and its unique barycentric weights specify the randomization
needed to synthesize that target signal. The theorem is a finite geometric
criterion rather than a quantification over decision problems or an unknown
kernel. The abstract coefficient-and-capacity version of the argument is
verified in \texttt{BoundaryIff/SimplexCapacity.lean}. Its large-sample
corollary is restricted: generic product experiments usually have more than
$k+1$ extreme likelihood-ratio types, so the simplex assumption need not
persist with $n$. The theorem is finite. Equal cardinalities of infinite
state and signal spaces do not imply a simplex, unique barycentric
coordinates, or a valid stochastic kernel; infinite models require separate
topological and measure-theoretic assumptions.

The next proposition gives a nontrivial countable economic class in which
those assumptions are explicit rather than abstract. State $i$ represents a
specific failure mode. The monitor reports $i$ with probability $\theta_i$
and otherwise emits a common noise report drawn from $q$.
\begin{proposition}[Countable diagnostic-monitor theorem]
\label{prop:countable-diagnostic}
Let $I=X=\mathbb N_0$. Let $(q_j)_{j\geq0}$ be a strictly positive
probability mass function and let $\theta_0=0$ and
$\theta_i\in(0,1)$ for $i\geq1$. Define
\begin{equation}
P_i(j)=(1-\theta_i)q_j+\theta_i\mathbf 1_{\{i=j\}}.
\label{eq:diagnostic-source}
\end{equation}
Let $Q=(Q_i)_{i\geq0}$ be any experiment on a countable target signal space
$Y$ with $Q_0(y)>0$ for every $y$, and write
$s_i(y)=Q_i(y)/Q_0(y)$. Then $P\succeqB Q$ if and only if, for every
$y\in Y$,
\begin{equation}
s_i(y)\geq 1-\theta_i\quad(i\geq1),
\qquad
\sum_{i\geq1}\frac{q_i}{\theta_i}
       \bigl(s_i(y)-(1-\theta_i)\bigr)\leq1.
\label{eq:diagnostic-c4}
\end{equation}
When these inequalities hold, the garbling is
\begin{equation}
\lambda_i(y)=\frac{q_i}{\theta_i}
       \bigl(s_i(y)-(1-\theta_i)\bigr)\quad(i\geq1),
\qquad
\lambda_0(y)=1-\sum_{i\geq1}\lambda_i(y),
\qquad
K(j,y)=\frac{Q_0(y)\lambda_j(y)}{q_j}.
\label{eq:diagnostic-kernel}
\end{equation}
Consequently $P^{\otimes n}\succeqB Q^{\otimes n}$ for every $n\geq1$.
\end{proposition}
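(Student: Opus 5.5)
The plan is to prove both directions directly from the garbling equations, solving them coordinate by coordinate. I will not invoke Theorem~\ref{thm:simplex-completion}, because its finite affine-independence and uniqueness argument does not transfer automatically to countably many states. The structure of \eqref{eq:diagnostic-source} makes the equations triangular. Since $\theta_0=0$, the reference row is $P_0=q$. For any candidate kernel $K$ and state $i\geq1$,
\[
\sum_j P_i(j)K(j,y)=(1-\theta_i)\sum_j q_jK(j,y)+\theta_iK(i,y).
\]
The state-$0$ equation $Q_0(y)=\sum_j q_jK(j,y)$ therefore turns the state-$i$ equation into
\[
Q_i(y)=(1-\theta_i)Q_0(y)+\theta_iK(i,y).
\]
This pins down every row $i\geq1$ of $K$ explicitly, and it is the concrete replacement for uniqueness of barycentric coordinates.

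\emph{Necessity.} Suppose a garbling $K$ exists. Then the display gives
\[
K(i,y)=\frac{Q_0(y)\bigl(s_i(y)-(1-\theta_i)\bigr)}{\theta_i}\qquad(i\geq1).
\]
Nonnegativity of $K(i,y)$, together with $Q_0(y)>0$, is exactly the first inequality in \eqref{eq:diagnostic-c4}. Put $\widehat\lambda_j(y)=q_jK(j,y)/Q_0(y)$. The state-$0$ equation gives $\sum_{j\geq0}\widehat\lambda_j(y)=1$, where the series has nonnegative terms. For $i\geq1$ we have $\widehat\lambda_i(y)=\lambda_i(y)$ as in \eqref{eq:diagnostic-kernel}. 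Hence $0\leq q_0K(0,y)/Q_0(y)=1-\sum_{i\geq1}\lambda_i(y)$, which is the second inequality.

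\emph{Sufficiency.} Assume \eqref{eq:diagnostic-c4} and define $K$ by \eqref{eq:diagnostic-kernel}. Nonnegativity of $K$ is immediate from the two inequalities. For the row sums with $j\geq1$,
\[
\sum_y K(j,y)=\frac{1}{\theta_j}\sum_y\bigl(Q_j(y)-(1-\theta_j)Q_0(y)\bigr)=\frac{1-(1-\theta_j)}{\theta_j}=1.
\]
Splitting the sum into two convergent series is legitimate because both $Q_j$ and $Q_0$ are probability mass functions. For row $0$, I use Tonelli to interchange $\sum_y$ and $\sum_{i\geq1}$; this is allowed because every $\lambda_i(y)\geq0$. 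It gives
\[
\sum_y Q_0(y)\sum_{i\geq1}\lambda_i(y)=\sum_{i\geq1}q_i\cdot1=1-q_0,
\]
so $\sum_y K(0,y)=(1-(1-q_0))/q_0=1$. The garbling equations then follow from the triangular identity read backwards:
\begin{itemize}
\item for state $0$, $\sum_j q_jK(j,y)=Q_0(y)\sum_j\lambda_j(y)=Q_0(y)$;
\item for state $i\geq1$, the sum is $(1-\theta_i)Q_0(y)+\theta_iK(i,y)=Q_i(y)$, by the definition of $\lambda_i$.
\end{itemize}

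\emph{Tensorization.} Given a garbling $K$, I take the product kernel $K^{\otimes n}(x_1\cdots x_n,y_1\cdots y_n)=\prod_t K(x_t,y_t)$ on the countable product spaces. It is row-stochastic because a product of convergent nonnegative series factorizes. It garbles $P_i^{\otimes n}$ into $Q_i^{\otimes n}$ because the state-$i$ sum factorizes coordinatewise, again by Tonelli. Therefore $P^{\otimes n}\succeqB Q^{\otimes n}$ for every $n\geq1$.

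The main obstacle is the infinite-dimensional bookkeeping: justifying every interchange of countable sums and avoiding any appeal to finite-dimensional affine independence. I handle this by always arranging nonnegative summands before swapping and by deriving the coordinates directly from the triangular form rather than from a uniqueness theorem.
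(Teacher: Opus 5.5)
Your proof is correct and follows essentially the paper's argument. Necessity comes from reading $\lambda_j(y)=q_jK(j,y)/Q_0(y)$ off the state-$0$ and state-$i$ garbling equations; sufficiency uses the explicit kernel, with row sums checked through $\sum_y Q_0(y)\lambda_i(y)=q_i$ and Tonelli for row $0$, and the product kernel then gives the $n$-sample claim. You derive nonnegativity of $K$ directly from the two inequalities, rather than the paper's ``they sum to one, so $K$ is nonnegative'', and you split the state-$i$ sum termwise. This makes the bookkeeping slightly cleaner, but the route is the same.
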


\begin{proof}
Relative to state $0$, the source likelihood-ratio atom at signal $0$ is
$b=(1-\theta_i)_{i\geq1}$, while the atom at signal $j\geq1$ is
$b+(\theta_j/q_j)e_j$. Hence \eqref{eq:diagnostic-c4} is exactly the
nonnegativity of the unique countable barycentric coordinates in
\eqref{eq:diagnostic-kernel}. They sum to one, so $K$ is nonnegative. For
$i\geq1$,
\begin{equation*}
\sum_jP_i(j)K(j,y)
=Q_0(y)\left[(1-\theta_i)+\frac{\theta_i}{q_i}\lambda_i(y)\right]
=Q_i(y).
\end{equation*}
For state $0$, the same calculation gives
$\sum_jP_0(j)K(j,y)=Q_0(y)$. The rows of $K$ sum to one because, for
$i\geq1$,
\begin{equation*}
\sum_yQ_0(y)\lambda_i(y)
=\frac{q_i}{\theta_i}\left(\sum_yQ_i(y)-(1-\theta_i)\sum_yQ_0(y)\right)=q_i,
\end{equation*}
and the identity for $i=0$ follows from Tonelli's theorem and
$\sum_jq_j=1$. This proves sufficiency. Conversely, any garbling defines
$\lambda_j(y)=q_jK(j,y)/Q_0(y)$. The state-$0$ equation gives
$\sum_j\lambda_j(y)=1$, and the state-$i$ equation gives the first formula
in \eqref{eq:diagnostic-kernel}; hence \eqref{eq:diagnostic-c4} is necessary.
Tensoring $K$ proves the final claim.
\end{proof}

The condition has a direct economic meaning. Every target signal must have
at least the source's baseline likelihood ratio in each failure state, and
its total diagnostic weight cannot exceed one. The resulting test is
signal-by-signal and does not quantify over decision problems or an unknown
kernel. For every finite $m$, restricting the construction to
$I=X=\{0,\ldots,m-1\}$ gives a nontrivial $m$-state, $m$-signal class. In
that class, Mu's conditions (1)--(3), when imposed together with
\eqref{eq:diagnostic-c4}, imply eventual Blackwell dominance; in fact the
stronger all-sample conclusion holds. The MGF and KL conditions are
necessary but redundant once the explicit diagnostic garbling is available.
Mu's original conditions are stated for finitely many states, so the
countable proposition is a structural extension rather than a literal claim
about their original package.

If the target is itself diagonal diagnostic, with strengths
$\eta_0=0$ and $\eta_i\in[0,1)$, so that
$Q_i(j)=(1-\eta_i)q_j+\eta_i\mathbf1_{\{i=j\}}$, condition
\eqref{eq:diagnostic-c4} becomes
\begin{equation}
0\leq\eta_i\leq\theta_i\quad\forall i\geq1,
\qquad
\sum_{i\geq1}q_i\left(1-\frac{\eta_i}{\theta_i}\right)
 +\sup_{i\geq1}\frac{\eta_i}{\theta_i}\leq1.
\label{eq:diagnostic-capacity}
\end{equation}
The first inequality is local weakening of every diagnosis.  The second is a
global capacity constraint: individually weaker diagnoses can still be
jointly impossible to simulate when their state-specific improvements compete
for the same source signal capacity.  This heterogeneous capacity effect is
absent when all ratios $\eta_i/\theta_i$ are equal.
For instance, with $q_0=0.1$, $q_1=q_2=0.45$,
$(\theta_1,\theta_2)=(0.4,0.8)$, and
$(\eta_1,\eta_2)=(0.39,0.1)$, each target diagnosis is individually weaker,
but the left side of the capacity inequality equals
$0.45(1-0.39/0.4)+0.45(1-0.1/0.8)+0.39/0.4=1.38>1$.
The target is therefore not a garbling of the source.  This is a genuine
aggregate capacity obstruction, not a failure of any one diagnostic channel.

For the likelihood-ratio version, let $(r_1(x),r_2(x))$ be source ratios and let $(s_1,s_2)$ be the ratios at a target atom $y^\ast$. Assume $s_1s_2\leq r_1(x)r_2(x)$ for every $x$, with all coordinates positive. Define
\begin{equation}
L(x)=s_2r_1(x)+s_1r_2(x),\qquad c=2s_1s_2.
\end{equation}
The arithmetic-geometric mean inequality gives $L(x)\geq c$. If a nonnegative coupling satisfies the two likelihood-ratio moment equations and $\nu(y^\ast)>0$, then
\begin{equation}
\sum_x\gamma(x,y^\ast)L(x)=\nu(y^\ast)c.
\end{equation}
Thus the nonnegative slack has zero weighted sum. Equality in the square-root inequality forces
\begin{equation}
r_1(x)=s_1,\qquad r_2(x)=s_2
\end{equation}
for every source atom carrying positive mass into $y^\ast$.

\begin{lemma}[Equality-face obstruction]\label{lem:obstruction}
If $\nu(y^\ast)>0$ and no source atom has both ratios equal to $(s_1,s_2)$, then no such likelihood-ratio martingale coupling exists.
\end{lemma}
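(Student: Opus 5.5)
The argument is by contradiction and follows the zero-slack computation displayed just before the statement. Suppose a nonnegative coupling $\gamma$ with marginals $(\mu,\nu)$ satisfies, for every target atom $y$, the likelihood-ratio moment equations
\[
\sum_x\gamma(x,y)r_i(x)=\nu(y)s_i(y),\qquad i=1,2.
\]
These are the equations in \eqref{eq:full-iff}, written with $\mu=P_{i_0}$ and $\nu=Q_{i_0}$. We restrict attention to the column $y^\ast$ with $\nu(y^\ast)>0$, and we work under the standing hypotheses of the paragraph above: all coordinates are positive and $s_1s_2\le r_1(x)r_2(x)$ for every source atom $x$.

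\textbf{Step 1: zero slack in the column $y^\ast$.} Take $L(x)=s_2r_1(x)+s_1r_2(x)$ and $c=2s_1s_2$. The arithmetic--geometric mean inequality gives
\[
L(x)\ge 2\sqrt{s_1s_2\,r_1(x)r_2(x)}\ge 2s_1s_2=c.
\]
Now take the combination $s_2\cdot(\text{equation }i=1)+s_1\cdot(\text{equation }i=2)$ in column $y^\ast$. The right-hand side becomes $\nu(y^\ast)(s_2s_1+s_1s_2)=\nu(y^\ast)c$. The column marginal gives $\sum_x\gamma(x,y^\ast)=\nu(y^\ast)$. Subtracting, we get
\[
\sum_x\gamma(x,y^\ast)\bigl(L(x)-c\bigr)=0.
\]
Every term in this sum is a product of nonnegative factors, so each term vanishes. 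Hence $L(x)=c$ on the support of $\gamma(\cdot,y^\ast)$.

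\textbf{Step 2: identify the equality set.} This is the one step needing care, because equality must hold in both inequalities of the chain. First, the second inequality must be an equality, which gives $r_1(x)r_2(x)=s_1s_2$. Second, the AM--GM step must be an equality, which gives $s_2r_1(x)=s_1r_2(x)$. Combining the two, $r_1(x)^2=s_1^2$, and positivity yields $r_1(x)=s_1$. Substituting back gives $r_2(x)=s_2$. So every source atom carrying positive mass into $y^\ast$ has ratio vector exactly $(s_1,s_2)$.

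\textbf{Step 3: conclude.} Since $\nu(y^\ast)>0$, the column $\gamma(\cdot,y^\ast)$ is not identically zero. It therefore charges some source atom $x$, and by Step 2 that atom has both ratios equal to $(s_1,s_2)$. This contradicts the hypothesis that no such atom exists, so no such coupling can exist.

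Through \eqref{eq:full-iff}, the same conclusion rules out a garbling $K$ at that sample size. In the language of Theorem~\ref{thm:iff}, Steps 1--2 supply the forced-support certificate with $f(y^\ast)$ empty, so the capacity condition fails trivially.
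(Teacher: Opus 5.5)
Your proof is correct and is essentially the paper's argument. It uses the same functional $L(x)=s_2r_1(x)+s_1r_2(x)$, the same zero-slack identity $\sum_x\gamma(x,y^\ast)(L(x)-c)=0$ in the column $y^\ast$, and the same contradiction from a positive-mass atom forced onto the equality face. Your Step 2, which tracks equality in both the AM--GM step and the bound $s_1s_2\le r_1(x)r_2(x)$, just spells out what the paper compresses into ``equality in the square-root inequality''. Your closing remark that $f(y^\ast)$ is empty is informal, since in Theorem~\ref{thm:iff} $f$ maps into $X$, but it plays no role in the proof.
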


\begin{proof}
The zero weighted slack forces a positive-mass source atom onto the equality face. Equality in the square-root inequality gives both coordinate equalities, contradicting the no-match assumption.\end{proof}

Economically, $\mu(x)$ is the source probability of a signal type and $\nu(y)$ is the target probability of an audited signal. On the equality face, $y$ has a unique source type capable of generating it without violating the two likelihood-ratio moments. The inequality $\mu(f(y))\geq\nu(y)$ is a capacity constraint. The residual condition asks whether all other target signals can still be supplied after that capacity is reserved.

The residual coupling clause in the boundary decomposition is essential.  Capacities at exposed
boundary points alone do not guarantee a martingale coupling.  For example,
consider a one-dimensional source law with masses
$\mu=(1/2,1/2)$ at ratios $(1/2,3/2)$ and a target law with masses
$\nu=(1/2,1/4,1/4)$ at ratios $(1/2,1,2)$.  The boundary column at $1/2$
uses exactly the available source mass, but the residual source has only ratio
$3/2$, whereas each residual target column requires its own conditional
source mean, $1$ or $2$.  No residual martingale coupling exists. This example shows that boundary
capacity tests do not replace residual feasibility in the finite decomposition;
it does not establish insufficiency of any proposed boundary condition
combined with the full MGF/KL package.

A substantive repair would specify a condition on the primitive experiments
and establish both its necessity and its sufficiency in conjunction with the
three MGF/KL conditions. This paper does not supply such an additional
condition. The counterexample and the finite coupling lemmas do not by
themselves prove that all remaining obstructions are boundary obstructions.

\section{The explicit all-sample-size obstruction}

For every source count $0\leq m\leq n$,
\begin{equation}
r_1(n,m)r_2(n,m)=\left(\frac{16}{25}\right)^n=s_1(n)s_2(n).
\label{eq:product-equality}
\end{equation}
Every source count therefore lies on the same product level set, but coordinatewise equality is impossible. If both coordinates matched, division of the two equations would imply
\begin{equation}
4^{2m}=9^n.
\label{eq:prime-obstruction}
\end{equation}
For $n>0$, the left side has only prime factor $2$ and the right side has only prime factor $3$.

\begin{theorem}[Finite-sample non-domination]\label{thm:nondom}
For every integer $n>0$, $P^{\otimes n}\not\succeqB Q^{\otimes n}$ for the pair in \eqref{eq:experiments}.
\end{theorem}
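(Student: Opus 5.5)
Suppose, for contradiction, that $P^{\otimes n}\succeqB Q^{\otimes n}$ for some $n>0$. By the finite coupling equivalence \eqref{eq:full-iff}, with reference state $0$, there is then a nonnegative coupling $\gamma$ of $P_0^{\otimes n}$ and $Q_0^{\otimes n}$ satisfying the likelihood-ratio moment equations. For each target word $y$ with $\nu(y)=Q_0^{\otimes n}(y)>0$ and each $i\in\{1,2\}$, these equations read
\[
\sum_x \gamma(x,y)\,R_{n,i}(x)=\nu(y)\,S_{n,i}(y).
\]
Every state-$0$ probability is $2^{-n}>0$, so all ratios are well defined and every target word has positive mass. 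This includes the all-second-coordinate word $y^\ast$. Its ratios are $(s_1(n),s_2(n))=((6/5)^n,(8/15)^n)$ by \eqref{eq:target-ratios}.

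The plan is to apply Lemma~\ref{lem:obstruction} to the column $y^\ast$. First I check its hypothesis $s_1s_2\le r_1(x)r_2(x)$ for every source word $x$. Each word $x$ with $m$ second-coordinate signals has ratios $r_1(n,m)$ and $r_2(n,m)$ from \eqref{eq:source-ratios}. Identity \eqref{eq:product-equality} gives equality, $r_1r_2=(16/25)^n=s_1s_2$, for every $m$. All coordinates are positive.

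With the hypothesis in place, the AM--GM argument preceding the lemma applies. Take $L(x)=s_2r_1(x)+s_1r_2(x)\ge 2s_1s_2$. The two moment equations at $y^\ast$ force zero total slack, so every source word carrying positive mass into $y^\ast$ has $(r_1,r_2)=(s_1,s_2)$ exactly. Since $\nu(y^\ast)=2^{-n}>0$, at least one source word must carry mass into $y^\ast$.

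It remains to verify the no-match hypothesis: no $m\in\{0,\dots,n\}$ satisfies $r_1(n,m)=s_1(n)$ and $r_2(n,m)=s_2(n)$ simultaneously. Suppose both equalities held. Dividing the two ratios gives $r_1/r_2=4^{2m-n}$ and $s_1/s_2=(9/4)^n$. Clearing denominators yields \eqref{eq:prime-obstruction}, namely $4^{2m}=9^n$. For $n\ge1$ the right side is an odd integer greater than $1$. The left side is either $1$ (when $m=0$) or even, so the two sides cannot be equal. Lemma~\ref{lem:obstruction} then says no coupling exists, which contradicts our assumption. Hence the claim holds for every $n>0$.

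The main care point is the reduction from a garbling to the count-level or word-level coupling. Working at the word level with \eqref{eq:full-iff} avoids any symmetrization, and all ratios depend only on counts, so the argument needs nothing more. I expect no real obstacle beyond checking that the product ratios equal those in \eqref{eq:source-ratios}--\eqref{eq:target-ratios}. That check is direct multiplication of the one-period ratios $(2/5,8/5)$, $(8/5,2/5)$ and $(4/5,22/15)$, $(6/5,8/15)$.
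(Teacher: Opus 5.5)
Your proof is correct and follows the paper's argument. The chain is the same: a garbling gives a likelihood-ratio martingale coupling, the AM--GM equality face sits at the all-second-coordinate column $y^\ast$, and the $4^{2m}=9^n$ prime obstruction rules out a coordinatewise match. The only difference is that you apply Lemma~\ref{lem:obstruction} directly to word-level couplings, whereas the paper first pushes the coupling through the count statistic. Skipping that pushforward is harmless and slightly more direct, because the lemma uses only the moment equations at the single column $y^\ast$, and every word's ratios depend only on its count.
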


\begin{proof}
Suppose a kernel $K$ exists and define
\begin{equation}
\gamma(w,z)=P_0^{\otimes n}(w)K(w,z).
\end{equation}
The row and column equations imply a likelihood-ratio martingale coupling. Push it through the count statistic. The count fiber with $m$ second-coordinate signals has cardinality $\binom{n}{m}$, so the pushed state-$0$ mass is $\binom{n}{m}/2^n$. The ratio identities survive because they depend only on the count. The target boundary atom has positive mass and satisfies equality in \eqref{eq:product-equality}, while \eqref{eq:prime-obstruction} rules out a coordinatewise match. Lemma~\ref{lem:obstruction} gives a contradiction.\end{proof}

The raw-to-count step is proved in Lean by the support/indicator bijection, the exact binomial fiber count, the uniform-mass pushforward identity, and the final theorem \texttt{no\_raw\_word\_garbling}.

\section{An audit decision problem}

Nature chooses $\theta\in\{0,1,2\}$ uniformly. The evaluator observes a signal and chooses $C$ (continue deployment) or $A$ (audit, restrict, or punish). Payoffs are
\begin{center}
\begin{tabular}{c@{\quad}ccc}
 & $\theta=0$ & $\theta=1$ & $\theta=2$\\ \midrule
$C$ & 2 & 0 & 0\\
$A$ & 0 & 1 & 1
\end{tabular}
\end{center}
For experiment $E$,
\begin{equation}
V_E=\sum_y\max_{a\in\{C,A\}}\sum_{\theta=0}^2\frac13E_\theta(y)U(a,\theta).
\end{equation}

\begin{proposition}[Separating audit problem]\label{prop:audit}
For the experiments in \eqref{eq:experiments},
\begin{equation}
V_P=\frac23,\qquad V_Q=\frac{32}{45}>\frac23.
\end{equation}
\end{proposition}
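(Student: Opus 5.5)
The plan is a direct evaluation of the two value expressions, signal by signal. Since the prior is uniform and there are only two actions, the payoff of a signal $y$ under experiment $E$ reduces to
\[
\max\Bigl\{\tfrac23E_0(y),\ \tfrac13\bigl(E_1(y)+E_2(y)\bigr)\Bigr\}.
\]
This holds because $C$ earns $2$ only in state $0$, and $A$ earns $1$ in states $1$ and $2$. So I will first record this reduction. Then I will substitute the rational entries of \eqref{eq:experiments} for each of the two signals and sum.

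For the source $P$, each signal has $P_0(y)=1/2$, so $\tfrac23P_0(y)=1/3$. The entries of $P_1$ and $P_2$ are mirror images, so $P_1(y)+P_2(y)=1$ on each signal and the audit branch also gives $1/3$. Hence both actions tie on both signals, and $V_P=2/3$. This equals the no-information value: $P$ is exactly uninformative for the binary decision between ``state $0$'' and ``states $\{1,2\}$''.

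For the target $Q$, I will compute the same two quantities on each signal.
\begin{itemize}
\item On the first signal, $\tfrac23\cdot\tfrac12=\tfrac{15}{45}$ against $\tfrac13\bigl(\tfrac25+\tfrac{11}{15}\bigr)=\tfrac{17}{45}$, so audit is chosen and the payoff is $17/45$.
\item On the second signal, $\tfrac{15}{45}$ against $\tfrac13\bigl(\tfrac35+\tfrac4{15}\bigr)=\tfrac{13}{45}$, so continue is chosen and the payoff is $15/45$.
\end{itemize}
Summing gives $V_Q=32/45$, and $32/45>30/45=2/3$ finishes the statement.

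There is no real obstacle, only bookkeeping with common denominators. The one point to check carefully is that the maxima on the two $Q$ signals are attained by different actions, since that is what makes $Q$ strictly more valuable. As a consistency check, I would remark that a strictly higher value for $Q$ in a finite decision problem contradicts $P\succeqB Q$ at $n=1$, by the value interpretation in \eqref{eq:decision-c4}. This agrees with Theorem~\ref{thm:nondom}.
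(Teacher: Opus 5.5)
Your proposal is correct and matches the paper's proof: a signal-by-signal comparison of the continuation and audit scores. Under $P$ the two scores tie on both signals, while under $Q$ audit is chosen on the first signal and continuation on the second, giving $V_Q=32/45>2/3$. The only difference is cosmetic: you keep the prior weight $1/3$ inside each score, whereas the paper factors it out and divides by $3$ at the end.
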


\begin{proof}
Under $P$, each signal gives continuation score $2P_0(y)=1$ and audit score $P_1(y)+P_2(y)=1$. Hence $V_P=2/3$. Under $Q$, the first signal gives scores $1$ and $17/15$, so audit is chosen; the second gives scores $1$ and $13/15$, so continuation is chosen. Hence $V_Q=(17/15+1)/3=32/45$.\end{proof}

If $P\succeqB Q$, Blackwell monotonicity would imply $V_P\geq V_Q$ for every decision problem. Proposition~\ref{prop:audit} is therefore a direct separating payoff. The boundary theorem explains why this policy is unavailable under the source experiment.

A small perturbation removes the tie under $P$. Replace the continuation payoff in state $0$ by $2+\varepsilon$, with $0<\varepsilon<4/15$. The value gap becomes
\begin{equation}
V_Q-V_P=\frac{2}{45}-\frac{\varepsilon}{6}>0.
\end{equation}

\section{Application: repeated monitoring of an adaptive AI agent}

The application is a repeated moral-hazard problem. An agent chooses whether to follow a protocol or take a profitable deviation. The principal observes noisy audit signals and decides whether to preserve access or impose a restriction. The hidden state may represent honest execution, reward hacking, or strategic concealment. Signals may include logs, tool calls, test results, verifier output, monitor-model judgments, and human review.

State $0$ denotes honest execution; states $1$ and $2$ denote two deviation modes. A signal is a log, test, verifier, or monitor output. $C$ retains deployment or tool access, whereas $A$ requests evidence, sandboxes, retrains, or revokes permissions. A product experiment is a review block of independent tasks.

For a block of $n$ independent target signals, restrict whenever any signal equals the second signal. Under $Q_1$ the restriction probability is
\begin{equation}
1-\left(\frac35\right)^n,
\end{equation}
and under $Q_2$ it is
\begin{equation}
1-\left(\frac4{15}\right)^n.
\end{equation}
If a deviation yields a one-period gain $g$, a block arrives after $n$ discounted periods, and restriction costs the agent $L$, this protocol supports
\begin{equation}
g\leq\delta^n\left[1-\left(\frac35\right)^n\right]L,
\qquad
g\leq\delta^n\left[1-\left(\frac4{15}\right)^n\right]L.
\end{equation}
These inequalities are specific to the protocol. The theorem explains why a source monitoring system cannot simulate every target policy, while the audit rule supplies one policy and its incentive calculation.

\section{Scope and formal verification}

The theorem is a structural finite-sample result. It proves that the operative garbling does not exist and identifies the equality-face mechanism. It does not compute a norm distance to the set of garblings or a universal monitoring-loss bound. For finite signal spaces one can instead study
\begin{equation}
\Gamma_n(P,Q)=\sup_{\|U\|_\infty\leq 1}\left[V_{Q^{\otimes n}}(U)-V_{P^{\otimes n}}(U)\right]
\end{equation}
or the finite deficiency
\begin{equation}
d_n(P,Q)=\inf_K\max_\theta\left\|Q_\theta^{\otimes n}-P_\theta^{\otimes n}K\right\|_1.
\end{equation}
Those are separate quantitative questions.

The canonical Lean file is \texttt{CompleteProof.lean}, which imports
\texttt{FullProof.lean}, \texttt{FiniteBlackwell.lean},
\texttt{ProductBridge.lean}, and
\texttt{MuMGFNegative.lean}. Together they contain the full finite
Blackwell--martingale iff, the forced-boundary
coupling iff, square-root equality lemmas, the Blackwell-to-likelihood-ratio
bridge, the raw pushforward construction, the binary-word support bijection,
the exact binomial fiber count, the concrete arrays, the complete MGF package,
and the final theorems \texttt{no\_concrete\_count\_garbling} and
\texttt{no\_raw\_word\_garbling}.

The command \texttt{lake build} completes successfully for the finite library,
and \texttt{lake env lean MuMGFNegative.lean} checks the analytic package.
The formalization contains no \texttt{sorry}, \texttt{admit}, or
\texttt{axiom}.

\section{Restricted completions}
The preceding condition is exact but tautological as a repair. A genuinely
non-circular result requires an explicit model class. The following one is a
standard noisy-audit structure. Fix latent audit qualities
$q_i\in(0,1)$ (pairwise distinct) and noise rates
$0<\varepsilon_P,\varepsilon_Q<1/2$ with
$\varepsilon_P\ne\varepsilon_Q$. The fourth condition is the
universal, state-by-state restriction
\begin{equation}
\tag{$\mathrm{C4}^{\mathrm{audit}}$}
\text{for every }i\in I,\qquad
 P_i=\operatorname{Bern}\!\left(\varepsilon_P+(1-2\varepsilon_P)q_i\right),\quad
 Q_i=\operatorname{Bern}\!\left(\varepsilon_Q+(1-2\varepsilon_Q)q_i\right).
\label{eq:audit-c4}
\end{equation}
It says that the two experiments are binary audits of the same state quality,
with state-independent symmetric classification noise. This is a familiar
measurement model in monitoring, testing, and information economics; it is a
structural restriction on primitive experiments, not a statement about an
unknown garbling.
For $q_i>q_j$, the derivatives of the two log-likelihood ratios with respect
to $\varepsilon$ are
\[
-\frac{q_i-q_j}{p_i(\varepsilon)p_j(\varepsilon)}
\quad\text{and}\quad
\frac{q_i-q_j}{(1-p_i(\varepsilon))(1-p_j(\varepsilon))},
\]
where $p_i(\varepsilon)=\varepsilon+(1-2\varepsilon)q_i$. Thus the extrema
change strictly with the noise rate, so the pair is generic whenever
$\varepsilon_P\ne\varepsilon_Q$.

\begin{theorem}[Noisy-audit completion]
\label{thm:audit-completion}
Within the class \eqref{eq:audit-c4}, with pairwise distinct $q_i$,
the following are equivalent:
\begin{equation}
\begin{aligned}
&\text{Mu's conditions (1)--(3)}
\quad\Longleftrightarrow\quad
\exists N\ \forall n\ge N:\ P^{\otimes n}\succeqB Q^{\otimes n}.
\end{aligned}
\label{eq:audit-completion-iff}
\end{equation}
In fact, when the equivalent conditions hold, the dominance is valid for
every $n\ge1$.
\end{theorem}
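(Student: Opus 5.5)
The plan is to reduce everything to the single scalar comparison $\varepsilon_P<\varepsilon_Q$. I will show that this inequality implies one-period dominance, hence dominance at every $n\ge1$, and that $\varepsilon_P>\varepsilon_Q$ violates Mu's conditions. The structural observation is that $p_i(\varepsilon)=\varepsilon+(1-2\varepsilon)q_i$ is the output law of a binary symmetric channel with crossover probability $\varepsilon$ applied to $\operatorname{Bern}(q_i)$. Binary symmetric channels compose as $\mathrm{BSC}(b)\circ\mathrm{BSC}(a)=\mathrm{BSC}(a+b-2ab)$. Throughout I assume $|I|\ge2$, which with pairwise distinct $q_i$ gives at least two distinct $P_i$ and two distinct $Q_i$.

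\emph{Sufficiency of $\varepsilon_P<\varepsilon_Q$.} Set
\[
\delta=\frac{\varepsilon_Q-\varepsilon_P}{1-2\varepsilon_P}.
\]
Because $0<\varepsilon_P<\varepsilon_Q<1/2$, we have $\delta\in(0,1/2)$. A direct check gives $\delta+\varepsilon_P-2\delta\varepsilon_P=\varepsilon_Q$, and hence
\[
p_i(\varepsilon_P)+\delta\bigl(1-2p_i(\varepsilon_P)\bigr)=p_i(\varepsilon_Q)
\quad\text{for every } i.
\]
So the $2\times2$ kernel $K=\mathrm{BSC}(\delta)$ satisfies \eqref{eq:garbling} simultaneously for all states, and $P\succeqB Q$. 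Applying $K^{\otimes n}$ coordinatewise gives $P^{\otimes n}\succeqB Q^{\otimes n}$ for every $n\ge1$. This proves the right-hand side of \eqref{eq:audit-completion-iff} with $N=1$, which is also the ``in fact'' clause.

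\emph{Mu's conditions force $\varepsilon_P<\varepsilon_Q$.} Since $\varepsilon_P\neq\varepsilon_Q$, it suffices to exclude $\varepsilon_P>\varepsilon_Q$. In that case the same construction with the roles of $P$ and $Q$ exchanged gives $Q=P'$ garbled back to $P$ by $\mathrm{BSC}(\delta')$ with $\delta'\in(0,1/2)$. I will then show that for some pair $q_i\ne q_j$ the ordered KL inequality in Mu's conditions is reversed strictly:
\[
\KL(P_i\|P_j)<\KL(Q_i\|Q_j).
\]
The cleanest route is to show that
\[
\varepsilon\mapsto \KL\bigl(\operatorname{Bern}(p_i(\varepsilon))\,\|\,\operatorname{Bern}(p_j(\varepsilon))\bigr)
\]
is strictly decreasing on $[0,1/2)$. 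By the composition law, this is equivalent to showing that a nondegenerate binary symmetric channel strictly reduces the KL divergence between two distinct Bernoulli inputs. The fallback is differentiation in $\varepsilon$, using the log-likelihood-ratio derivatives already displayed before the theorem. Genericity and boundedness hold automatically in the class, as noted there, so the KL (or MGF) comparison is the component that bites.

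\emph{Eventual dominance implies Mu's conditions.} This direction is Mu et al.'s necessity result, which the paper takes as given, as in Corollary~\ref{cor:simplex-tail}. Alternatively, it follows inside the class: eventual dominance excludes $\varepsilon_P>\varepsilon_Q$ by the strict KL reversal just described, because KL is additive under tensorization and monotone under garbling. So $\varepsilon_P<\varepsilon_Q$, and the conditions can then be checked from the explicit garbling.

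The main obstacle is the strictness step in the second part. One must pin down which of Mu's conditions is strict and confirm that it is violated, rather than merely weakened, when $\varepsilon_P>\varepsilon_Q$. Strict data processing for a binary symmetric channel with $\delta'>0$ on non-identical binary inputs is the cleanest tool, and I would try it first.
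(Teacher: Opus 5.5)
Your proposal is correct and follows the paper's proof essentially step for step. Strict KL contraction under a nondegenerate binary symmetric channel turns Mu's strict ordered KL condition into $\varepsilon_P<\varepsilon_Q$. The single channel $\mathrm{BSC}(\delta)$ with $\delta=(\varepsilon_Q-\varepsilon_P)/(1-2\varepsilon_P)\in(0,1/2)$, and its tensor powers, then give dominance at every $n\ge1$, and the converse is Mu et al.'s necessity theorem, with genericity supplied by $\varepsilon_P\ne\varepsilon_Q$.

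Your in-class alternative for the converse does exclude $\varepsilon_P>\varepsilon_Q$. However, ``checking the conditions from the explicit garbling'' would still need a strict-convexity argument to obtain the strict MGF inequalities, so citing the necessity theorem, as the paper does, is the cleaner route.
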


\begin{proof}
The strict ordered KL inequalities in Mu's condition (3) imply
$\varepsilon_P<\varepsilon_Q$. Indeed, a binary symmetric channel with
larger error strictly contracts the relative entropy between any two
distinct input Bernoulli laws; the input laws here are distinct because the
$q_i$ are pairwise distinct. Put
\[
\delta=\frac{\varepsilon_Q-\varepsilon_P}{1-2\varepsilon_P}\in(0,1/2).
\]
For every state $i$,
\[
\varepsilon_Q+(1-2\varepsilon_Q)q_i
 =\delta+(1-2\delta)\bigl[\varepsilon_P+(1-2\varepsilon_P)q_i\bigr].
\]
Thus $Q$ is obtained from $P$ by the same binary symmetric channel of error
$\delta$, and applying that channel independently to each coordinate gives
$P^{\otimes n}\succeqB Q^{\otimes n}$ for every $n$. Conversely, eventual
Blackwell dominance implies Mu's conditions (1)--(3), by their necessity
theorem.\end{proof}

The theorem is an exact iff only after the accepted audit structure is fixed;
it is not an unrestricted solution of Mu's many-state conjecture. The current
counterexample lies outside this class. This distinction matters: in the
unrestricted finite class, a universal condition strong enough for an iff is
the full multivariate convex order in \eqref{eq:universal-c4}, which is
equivalent to Blackwell order itself. No weaker condition 4 with a proved
unconditional iff is supplied here.

A broader standard measurement-error completion uses a common noise semigroup.
Let $A$ be a finite signal alphabet, let $\rho_i$ be pairwise distinct
full-support probability vectors on $A$, and let $(T_t)_{t\ge0}$ be an
irreducible finite-state continuous-time Markov semigroup whose transition
matrix $T_t$ is strictly positive for every $t>0$. Fix $t_P,t_Q>0$. The
universal fourth condition is
\begin{equation}
\tag{$\mathrm{C4}^{\mathrm{noise}}$}
\text{for every }i\in I\text{ and every }a\in A,\qquad
P_i(a)=(\rho_iT_{t_P})(a),\quad Q_i(a)=(\rho_iT_{t_Q})(a).
\label{eq:noise-c4}
\end{equation}
Assume Mu's standing genericity hypothesis; boundedness and positivity are
automatic here.

\begin{theorem}[Common-noise completion]
\label{thm:noise-completion}
Within \eqref{eq:noise-c4}, the following are equivalent:
\[
\text{Mu's conditions (1)--(3)}
\quad\Longleftrightarrow\quad t_P<t_Q
\quad\Longleftrightarrow\quad
\forall n\ge1:\ P^{\otimes n}\succeqB Q^{\otimes n}.
\]
Consequently these conditions are also equivalent to eventual large-sample
Blackwell dominance.
\end{theorem}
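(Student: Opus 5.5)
The plan is to prove the cycle
\[
\text{Mu (1)--(3)}\ \Rightarrow\ t_P<t_Q\ \Rightarrow\ \forall n\ge1:\ P^{\otimes n}\succeqB Q^{\otimes n}\ \Rightarrow\ \text{Mu (1)--(3)}.
\]
The equivalence with eventual dominance then follows. All-$n$ dominance trivially implies eventual dominance, and eventual dominance implies Mu's conditions by their necessity theorem, just as in the proof of Theorem~\ref{thm:audit-completion}.

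\emph{Step 1: $t_P<t_Q$ implies dominance at every $n$.} I will use the semigroup property $T_{t_Q}=T_{t_P}T_{t_Q-t_P}$. Since $t_Q-t_P>0$, the matrix $K=T_{t_Q-t_P}$ is a row-stochastic Markov kernel on $A$. For every state $i$,
\[
P_iK=\rho_iT_{t_P}T_{t_Q-t_P}=\rho_iT_{t_Q}=Q_i,
\]
so $K$ is a single state-independent garbling with $P\succeqB Q$. Applying $K^{\otimes n}$ coordinatewise gives $P_i^{\otimes n}K^{\otimes n}=Q_i^{\otimes n}$, hence dominance for every $n\ge1$. This is the same tensorization step used in Proposition~\ref{prop:countable-diagnostic} and Theorem~\ref{thm:audit-completion}.

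\emph{Step 2: dominance at every $n$ implies Mu's conditions.} This is immediate from the necessity theorem of Mu et al.\ for eventual dominance. It applies because positivity and boundedness are automatic: $T_t>0$ for $t>0$ and $\rho_i$ has full support, so every $P_i(a)$ and $Q_i(a)$ is strictly positive on the finite alphabet $A$. Genericity is assumed in the statement.

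\emph{Step 3: Mu's conditions imply $t_P<t_Q$.} I will argue by contraposition using only the strict ordered KL inequalities in condition (3), namely $\KL(P_i\|P_j)>\KL(Q_i\|Q_j)$ for some, indeed every, ordered pair $i\ne j$.
\begin{itemize}
\item If $t_P=t_Q$, then $P=Q$, so the two KL divergences are equal and the strict inequality fails.
\item If $t_P>t_Q$, then Step 1 with the roles exchanged shows that $P_i=Q_iT_{t_P-t_Q}$ for all $i$, with a common kernel. The data-processing inequality for relative entropy then gives $\KL(P_i\|P_j)\le\KL(Q_i\|Q_j)$, again contradicting strictness.
\end{itemize}
Either way we conclude $t_P<t_Q$. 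Weak data processing suffices here, so no strict-contraction argument of the kind used in the noisy-audit case is needed.

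\emph{Main obstacle.} The substantive point is bookkeeping rather than analysis. I must confirm that Mu's condition (3), as the paper uses it in Proposition~\ref{prop:mgf}, consists of \emph{strict} KL inequalities. Otherwise the case $t_P=t_Q$, where $P=Q$ and dominance holds trivially, would be consistent with (1)--(3) and the first equivalence would break.

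I also need the KL terms to be finite and well defined, which full support guarantees. Pairwise distinctness of the $\rho_i$ and irreducibility of the semigroup are not needed for the implications themselves. They only ensure that the class is nondegenerate and that the genericity hypothesis is meaningful.
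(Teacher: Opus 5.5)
Your proposal is correct and follows the paper's proof in outline. Step 1 (the semigroup identity $Q_i=P_iT_{t_Q-t_P}$, tensorized) and Step 2 (Mu's necessity theorem for the return direction and for eventual dominance) coincide with what the paper does.

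The one real difference is how you derive $t_P<t_Q$ from condition (3):
\begin{itemize}
\item The paper invokes \emph{strict} log-sum data processing to show that $t\mapsto\KL(\rho_iT_t\|\rho_jT_t)$ is strictly decreasing. This uses $T_s>0$ for $s>0$ and $\rho_i\neq\rho_j$.
\item You use trichotomy together with only the \emph{weak} data-processing inequality.
\end{itemize}
Your version is valid because condition (3) is itself strict. The paper also relies on that strictness to exclude $t_P=t_Q$, where the divergences are equal. So the strict-contraction step is not needed for this theorem, and your argument is slightly more elementary.

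What the paper's route buys in addition is the converse for condition (3) on its own. Within the class, strict monotonicity shows the strict KL inequalities hold automatically whenever $t_P<t_Q$, so (3) alone is equivalent to $t_P<t_Q$ without Mu's necessity theorem. Your route obtains that direction only through the necessity theorem. The theorem needs that result anyway for conditions (1)--(2), so nothing is lost.
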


\begin{proof}
If $t_Q>t_P$, the semigroup identity gives
$Q_i=P_iT_{t_Q-t_P}$ for every state, so the same state-independent noise
kernel gives a product garbling for every $n$. Conversely, for distinct
$i,j$, strict log-sum data processing implies that
$t\mapsto\KL(\rho_iT_t\|\rho_jT_t)$ is strictly decreasing: $T_s$ has strictly
positive entries for $s>0$, and equality in the log-sum inequality would force
$\rho_i=\rho_j$. Thus Mu's ordered KL condition forces $t_P<t_Q$.
Finally, eventual Blackwell dominance implies Mu's three conditions by their
necessity theorem.
\end{proof}

The interpretation is a common state-independent forgetting or
misclassification process: both monitoring systems observe the same latent
quality laws, but one observation is recorded after more noise has accumulated.
The condition is primitive and begins with universal quantifiers; it is not a
claim that an unknown Blackwell kernel exists. Binary symmetric audits and the
categorical contamination model are special cases. The theorem is a restricted
completion, not an unrestricted resolution of Mu's conjecture.

There is a broader, non-parametric completion for modular states. Let
$I=\{0,1\}^{d}$, let
$\mathcal X=\prod_{j=1}^{d}\mathcal X_j$ and
$\mathcal Y=\prod_{j=1}^{d}\mathcal Y_j$, with all component signal spaces
finite. The fourth condition is the
separability restriction
\begin{equation}
\tag{$\mathrm{C4}^{\mathrm{mod}}$}
\text{for every }\theta\in\{0,1\}^{d},\ x\in\mathcal X,\ y\in\mathcal Y,
\qquad
P_\theta(x)=\prod_{j=1}^{d}P^j_{\theta_j}(x_j),\quad
Q_\theta(y)=\prod_{j=1}^{d}Q^j_{\theta_j}(y_j).
\label{eq:modular-c4}
\end{equation}
Assume, for every component $j$, that the binary pair
$(P^j_0,P^j_1;Q^j_0,Q^j_1)$ is bounded, strictly positive, and generic.
As in Mu et al.'s proposition, also assume that the full pair is bounded and
generic in the sense of its ordered pairwise likelihood-ratio extrema.
Economically,
the state is a vector of binary attributes (tasks, failure modes, or quality
dimensions), and the monitoring system observes them through conditionally
independent component audits. This is a standard modular information
structure.
For example, with $d=2$, $\theta_j=1$ can mean that module $j$ is faulty and
$x_j$ can be its test result. The source and target systems may use different
test technologies in each module; \eqref{eq:modular-c4} requires only that the
two test results remain conditionally independent given the module states.

\begin{theorem}[Modular-state completion]
\label{thm:modular-completion}
Within \eqref{eq:modular-c4}, the following are equivalent:
\begin{equation}
\begin{aligned}
&\text{Mu's conditions (1)--(3) for the full experiments}
\quad\Longleftrightarrow\quad
\exists N\ \forall n\ge N:\ P^{\otimes n}\succeqB Q^{\otimes n}.
\end{aligned}
\label{eq:modular-completion-iff}
\end{equation}
\end{theorem}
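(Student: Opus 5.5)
The forward implication will be reduced to the two-state theorem of Mu, Pomatto, Strack, and Tamuz applied module by module. The product structure \eqref{eq:modular-c4} then reassembles the module-wise garblings. The reverse implication is Mu et al.'s necessity theorem: eventual dominance of the full experiments implies their conditions (1)--(3). The full pair is assumed bounded and generic, so that theorem applies verbatim.

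For the forward direction, I first restrict the full conditions to states differing in one coordinate. Fix $j$, and fix $\theta,\theta'\in\{0,1\}^d$ that agree off coordinate $j$, with $\theta_j=0$ and $\theta'_j=1$. By \eqref{eq:modular-c4},
\[
\frac{P_\theta(x)}{P_{\theta'}(x)}=\frac{P^j_0(x_j)}{P^j_1(x_j)} .
\]
This ratio depends only on $x_j$. The other factors $\prod_{l\neq j}P^l_{\theta_l}(x_l)$ sum to one. Hence every quantity in Mu's conditions that involves only the ratio between $\theta$ and $\theta'$ collapses to the corresponding quantity for the binary pair $(P^j_0,P^j_1)$, and the same holds for $Q$. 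Concretely, these are the KL divergences $\KL(P_\theta\|P_{\theta'})$ and $\KL(P_{\theta'}\|P_\theta)$, and the MGF terms in which every exponent other than the one on the $\theta/\theta'$ ratio is zero. In particular, setting all other exponents to zero in Mu's multi-state MGF inequalities, on both the positive cone and the negative region, yields exactly the two-state MGF inequalities for module $j$. Taking $\theta$ and $\theta'$ as the reference state in turn covers both orientations. The ordered KL inequalities of (3) restrict in the same way. Boundedness, positivity and genericity of each binary pair are standing assumptions. The main obstacle I expect is checking carefully that the strict inequalities survive the restriction, and that the domain of exponents Mu et al. use in the binary characterization lies within the face of the multi-state domain obtained by zeroing coordinates. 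That face lies in the closed region, so I would verify that the binary domain is covered by the strict part and not merely its boundary; if not, I would use the pairwise MGF conditions directly, since these appear among Mu's conditions (1)--(3).

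Given the binary conditions, the two-state theorem yields $N_j$ such that $(P^j)^{\otimes n}\succeqB (Q^j)^{\otimes n}$ for all $n\geq N_j$. Let $K^n_j$ denote the corresponding kernel. Set $N=\max_j N_j$, which is finite because $d$ is finite. For $n\geq N$, permuting coordinates identifies $P^{\otimes n}_\theta$ with $\bigotimes_{j}(P^j_{\theta_j})^{\otimes n}$ on $\prod_j\mathcal X_j^n$, and similarly for $Q$. The kernel $K^n=\bigotimes_j K^n_j$ is row-stochastic, and
\[
\sum_x P^{\otimes n}_\theta(x)K^n(x,y)=\prod_j\sum_{x_j}(P^j_{\theta_j})^{\otimes n}(x_j)K^n_j(x_j,y_j)=Q^{\otimes n}_\theta(y)
\]
for every $\theta$. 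This gives $P^{\otimes n}\succeqB Q^{\otimes n}$ for all $n\geq N$ and completes the equivalence.
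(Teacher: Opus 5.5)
Your proposal is correct and follows the paper's proof. For sufficiency, both restrict Mu's MGF and KL conditions to pairs of states differing in one coordinate, apply Mu et al.'s two-state theorem module by module with $N=\max_j N_j$, and tensor the component kernels; for necessity, both invoke Mu's necessity theorem. The differences are minor: in the necessity direction the paper re-derives full-pair genericity from component genericity via marginalized kernels, which your appeal to the standing full-pair assumption makes unnecessary. Your worry about the exponent domain also resolves directly, since the single-coordinate faces $\{u>0\}$ and $\{0<a<1\}$ lie inside the strict regions of conditions (1)--(2).
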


\begin{proof}
For the necessity direction, suppose eventual full dominance holds. Fix a
component $j$. For each $n$, average a full garbling over the common source
law of the other coordinates and then marginalize the other target
coordinates. More explicitly, fix the other-coordinate state $u$ and define
the marginal kernel by
\[
\bar K^j_n(x_j,y_j)=
\sum_{x_{-j},y_{-j}}
 (P^{-j}_u)^{\otimes n}(x_{-j})
 K_n((x_j,x_{-j}),(y_j,y_{-j})).
\]
Here $x_{-j}$ and $y_{-j}$ range over the product signal spaces with
coordinate $j$ removed, and $P^{-j}_u=\prod_{\ell\ne j}P^\ell_{u_\ell}$.
It is independent of the selected component state $t$, and marginalizing the
full garbling gives
\[
\sum_{x_j}(P^j_t)^{\otimes n}(x_j)\bar K^j_n(x_j,y_j)
 =(Q^j_t)^{\otimes n}(y_j),\qquad t\in\{0,1\}.
\]
Thus it produces a binary garbling from
$(P^j_t)^{\otimes n}$ to $(Q^j_t)^{\otimes n}$ for $t=0,1$; hence every
component is eventually Blackwell dominant. For a binary garbling, the target
likelihood ratio is a conditional expectation of the source likelihood ratio;
therefore its maximum cannot increase and its minimum cannot decrease. Apply
this to the $n$-fold ratios and take $n$th roots. Component genericity then
gives a strict increase of the likelihood-ratio maximum and a strict decrease
of its minimum from $Q^j$ to $P^j$.
For any two full states, the extrema of the product log-likelihood ratio are
sums of the component extrema over the coordinates on which the states
differ. The strict component gaps have the same sign in each sum, so they
cannot cancel. The full pair is therefore generic, and Mu's necessity theorem
gives conditions (1)--(3).

Conversely, fix a component $j$ and two full states that differ only in coordinate $j$.
In each of Mu's MGF inequalities set all exponents to zero except the
coordinate corresponding to this second state; under the other true state,
use the analogous vector with the roles reversed. The resulting inequalities
are exactly the positive- and negative-cone MGF inequalities for the binary
component pair $(P^j_0,P^j_1)$ versus $(Q^j_0,Q^j_1)$. The ordered KL
inequality and the component genericity also restrict to this pair. Mu's
two-state characterization therefore gives an $N_j$ such that, for every
$n\ge N_j$, there is a single kernel $K^j_n$ satisfying
\[
(Q^j_t)^{\otimes n}=(P^j_t)^{\otimes n}K^j_n,
\qquad t\in\{0,1\}.
\]
There are finitely many components, so take $N=\max_jN_j$. Tensor the
component garbling kernels and rearrange coordinates:
\[
\bigotimes_{j=1}^{d}(P^j_{\theta_j})^{\otimes n}
\succeqB
\bigotimes_{j=1}^{d}(Q^j_{\theta_j})^{\otimes n}.
\]
By \eqref{eq:modular-c4} these are exactly
$P_\theta^{\otimes n}$ and $Q_\theta^{\otimes n}$, for every state
$\theta$. This invokes the binary-state characterization in Theorem 1 of
Mu et al.~(\href{https://doi.org/10.3982/ECTA17548}{Econometrica}); that
external theorem is the only input outside this manuscript. The two-factor
tensorization step is the elementary finite-product construction checked in
the accompanying Lean file; iterating that step gives the displayed finite
$d$-fold product.\end{proof}

This theorem is not a reformulation of Blackwell order: condition
\eqref{eq:modular-c4} is a primitive independence restriction, and the proof
uses one-dimensional binary comparisons plus tensorization. It is a genuine
restricted repair with a direct interpretation in modular monitoring. It does
not apply to the present three-state counterexample, whose state space is not
a binary product.

\section{Analytic verification of the two experiments}
This appendix supplies the elementary calculations behind the continuum MGF and
Kullback--Leibler conditions.  No external counterexample or separation result
is used.  Signals are $s\in\{0,1\}$ and all displayed probabilities are
strictly positive.  Put
\[
 U=(1/2,1/2),\quad A=(1/5,4/5),\quad B=(4/5,1/5),
\]
\[
 C=(2/5,3/5),\quad D=(11/15,4/15),\quad V=(7/10,3/10),
\]
and let
\[
 P=(U,A,B),\qquad Q=(U,C,D),\qquad R=(U,U,V).
\]
Coordinatewise,
\begin{equation}\label{eq:mix}
 Q=\tfrac13P+\tfrac23R,
\end{equation}
which follows from $C=(A+2U)/3$ and $D=(B+2V)/3$.
For a true state $i$ and the other two states $j,k$, write
\[
 \MGF_E^i(u,v)=\sum_s E_i(s)\Big(\frac{E_i(s)}{E_j(s)}\Big)^u
                            \Big(\frac{E_i(s)}{E_k(s)}\Big)^v .
\]
The order $(j,k)$ is $(1,2)$ for $i=0$, $(0,2)$ for $i=1$, and $(0,1)$ for
$i=2$; changing the order only exchanges $u$ and $v$.

\subsection*{Positive cone}
For later reference, the six explicit expressions are
\begin{align*}
 m_{P0}(u,v)&=\tfrac12( (5/8)^u(5/2)^v+(5/2)^u(5/8)^v),\\
 m_{Q0}(u,v)&=\tfrac12( (5/6)^u(15/8)^v+(5/4)^u(15/22)^v),\\
 m_{P1}(u,v)&=\tfrac15(2/5)^u(1/4)^v+\tfrac45(8/5)^u4^v,\\
 m_{Q1}(u,v)&=\tfrac25(4/5)^u(6/11)^v+\tfrac35(6/5)^u(9/4)^v,\\
 m_{P2}(u,v)&=\tfrac45(8/5)^u4^v+\tfrac15(2/5)^u(1/4)^v,\\
 m_{Q2}(u,v)&=\tfrac{11}{15}(22/15)^u(11/6)^v
                 +\tfrac4{15}(8/15)^u(4/9)^v.
\end{align*}
Throughout, $x^t$ denotes the positive real power.

\begin{lemma}[state $0$]
For $u,v\geq0$, not both zero, $m_{P0}(u,v)>m_{Q0}(u,v)$.
\end{lemma}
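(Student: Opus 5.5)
The plan is to treat both sides as two-point exponential sums and compare the exponent points by a dilation-plus-domination argument. For this it is convenient to set $z=(u,v)$ with $u,v\ge0$ and $z\neq(0,0)$.

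Write $m_{P0}(u,v)=\tfrac12\bigl(e^{\langle z,p_1\rangle}+e^{\langle z,p_2\rangle}\bigr)$, where
\[
p_1=(\log\tfrac58,\log\tfrac52),\qquad p_2=(\log\tfrac52,\log\tfrac58).
\]
Similarly write $m_{Q0}(u,v)=\tfrac12\bigl(e^{\langle z,q_1\rangle}+e^{\langle z,q_2\rangle}\bigr)$, where
\[
q_1=(\log\tfrac56,\log\tfrac{15}8),\qquad q_2=(\log\tfrac54,\log\tfrac{15}{22}).
\]
Both experiments put mass $\tfrac12,\tfrac12$ on their two points. So it suffices to produce weights $w_1,w_2\in[0,1]$ with $w_1+w_2=1$ such that, coordinatewise,
\[
q_i\le \bar p_i:=w_ip_1+(1-w_i)p_2\qquad(i=1,2).
\]
Given such weights, monotonicity of $t\mapsto e^t$ and $z\ge0$ give $e^{\langle z,q_i\rangle}\le e^{\langle z,\bar p_i\rangle}$. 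Convexity of the exponential then gives
\[
e^{\langle z,\bar p_i\rangle}\le w_ie^{\langle z,p_1\rangle}+(1-w_i)e^{\langle z,p_2\rangle}.
\]
Averaging with weights $\tfrac12,\tfrac12$ and using $w_1+w_2=1$ yields $m_{Q0}\le m_{P0}$.

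A cheap consistency check comes first. The average of the $p$'s is $(\log\tfrac54,\log\tfrac54)$. The average of the $q$'s is $(\tfrac12\log\tfrac{25}{24},\tfrac12\log\tfrac{225}{176})$, and both coordinates lie strictly below $\log\tfrac54$. So the averaged constraint holds with room to spare.

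Next I would fix $w_1$ and set $w_2=1-w_1$. I would then solve the four linear inequalities in the single unknown $w_1$. Since $p_1-p_2=(-2\log 2)(1,-1)$, the $p$-segment is the antidiagonal through $(\log\tfrac54,\log\tfrac54)$, and $\bar p_i$ moves along it. Each of the four inequalities becomes an explicit bound on $w_1$ with logarithms of small rationals. I expect a nonempty interval, and would pick a convenient interior point, probably with $w_1$ near $\tfrac14$ because $q_1$ leans toward $p_1$. I would choose it strictly inside the interval, so that every one of the four inequalities $q_i\le\bar p_i$ holds strictly.

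Strictness then follows from the first step. Since $z\ge0$, $z\ne0$, and $q_i<\bar p_i$ in both coordinates, we get $\langle z,q_i\rangle<\langle z,\bar p_i\rangle$, hence $e^{\langle z,q_i\rangle}<e^{\langle z,\bar p_i\rangle}$ for each $i$. That gives $m_{P0}(u,v)>m_{Q0}(u,v)$.

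I expect the main obstacle to be the numerical verification that the interval for $w_1$ is nonempty with strict slack. It reduces to comparing rational powers, such as $(6/5)\cdot(5/8)^{w}(5/2)^{1-w}\ge1$, and I would clear these by choosing $w_1$ rational and raising to integer powers. If that interval turns out to be empty, the fallback is to allow a general coupling, that is, to split each $q_i$ across both $p_j$ with a $2\times2$ doubly-stochastic matrix. The same convexity-plus-monotonicity argument then goes through with two free parameters instead of one.

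I would not use the mixture identity \eqref{eq:mix} here. Bounding via $R$ fails for large $u=v$, because the two-point law $R$ has a wider spread than $P$ along the diagonal.
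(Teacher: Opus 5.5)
\textbf{Gap in the strictness step.} Your coupling-plus-convexity framework is sound, but the step where you get strictness fails as planned. Note that $(5/6)(15/8)=25/16=(5/8)(5/2)$. So the first $Q$-atom $q_1$ lies on the same antidiagonal $x+y=\log(25/16)$ that contains the $p$-segment. Since $\bar p_1$ also lies on that line, $q_1\le\bar p_1$ coordinatewise forces $\bar p_1=q_1$. Explicitly, the two coordinate conditions for $q_1$ reduce to $4^{w_1}\le 3$ and $4^{w_1}\ge 3$.

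The admissible set for $w_1$ is therefore the single irrational point $w_1=\log_4 3\approx 0.79$. Several of your plans fail as a result:
\begin{itemize}
\item $w_1$ is not near $\tfrac14$; at $w_1=\tfrac14$ the second coordinate condition fails, since $\sqrt2<3$.
\item No rational choice of $w_1$ exists.
\item There is no interior point to choose.
\item Your claim that $q_1<\bar p_1$ holds strictly is false for every choice of weights.
\end{itemize}
Your fallback adds no freedom either. With equal masses, the $2\times 2$ doubly stochastic couplings are exactly the one-parameter family $(w_1,1-w_1)$ you already have.

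\textbf{Repair.} Take $w_1=\log_4 3$ and $w_2=\log_4(4/3)$. Then $\bar p_1=q_1$ and $\bar p_2=(\log\tfrac{15}{8},\log\tfrac56)$. Meanwhile $q_2=(\log\tfrac54,\log\tfrac{15}{22})$ is strictly below $\bar p_2$ in both coordinates, because $\tfrac54<\tfrac{15}{8}$ and $\tfrac{15}{22}<\tfrac56$.

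In multiplicative form, put $a=(5/8)^u(5/2)^v$ and $b=(5/2)^u(5/8)^v$. Weighted AM--GM gives
\[
(5/6)^u(15/8)^v=a^{w_1}b^{1-w_1}\le w_1a+(1-w_1)b .
\]
For $(u,v)\ne(0,0)$ it also gives
\[
(5/4)^u(15/22)^v<(15/8)^u(5/6)^v=a^{w_2}b^{1-w_2}\le w_2a+(1-w_2)b .
\]
Adding the two lines yields $2m_{Q0}<a+b=2m_{P0}$. All of the strictness comes from the second atom. This is forced on the diagonal $u=v$, where $a=b$ and the Jensen step is an equality.

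\textbf{Comparison with the paper.} The repaired argument is a genuinely different route. The paper splits into the cases $v\ge u$ and $u\ge v$. It factors out $L^{\min(u,v)}$ with $L=25/16$, which is exactly the collinearity you missed, and uses $K=75/88<L$ for the second target atom. This reduces the lemma to the scalar inequalities $3^r+(12/11)^r<4^r+1$ and $(4/3)^r+2^r<4^r+1$. Your version avoids the case split and the one-variable estimates, at the cost of the irrational weight $\log_4 3$.
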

\begin{proof}
Let $L=25/16$, $K=75/88<L$.  If $v\geq u$, put $m=u$ and $r=v-u$.
After multiplying by $2$,
\[
 2m_{P0}=L^m\{(5/2)^r+(5/8)^r\},\quad
 2m_{Q0}=L^m(15/8)^r+K^m(15/22)^r.                 \tag{A.1}
\]
For $r>0$, multiplication by $(5/8)^{-r}$ reduces the needed strict
inequality to
\[
 3^r+(12/11)^r<4^r+1.                              \tag{A.2}
\]
Indeed,
\[
 (12/11)^r-1<(4/3)^r-1\leq3^r((4/3)^r-1)=4^r-3^r,
\]
where $3^r\geq1$ and $(4/3)^r\geq1$.  If $u\geq v$, put $m=v$ and
$r=u-v$; the same calculation reduces the claim to
\[
 (4/3)^r+2^r<4^r+1,                              \tag{A.3}
\]
which follows from
$(4/3)^r-1<2^r-1\leq2^r(2^r-1)=4^r-2^r$.
The factors $K^m\leq L^m$ then give the strict comparison in (A.1) in both
cases.  If $r=0$, the nonzero assumption gives $m>0$, and
$2m_{P0}=2L^m>L^m+K^m=2m_{Q0}$.  At $(u,v)=(0,0)$ both sides equal one.
\end{proof}

\begin{lemma}[state $1$]
For $u,v\geq0$, not both zero, $m_{P1}(u,v)>m_{Q1}(u,v)$.
\end{lemma}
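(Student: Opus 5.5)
The plan is to imitate the proof of the state-$0$ lemma: factor out the dominant exponential, then reduce to elementary two-point inequalities in which each term on the $Q$ side is dominated by a term on the $P$ side. The target is
\[
\tfrac15(2/5)^u(1/4)^v+\tfrac45(8/5)^u4^v>\tfrac25(4/5)^u(6/11)^v+\tfrac35(6/5)^u(9/4)^v
\]
for $u,v\ge0$, not both zero. At $(u,v)=(0,0)$ both sides equal $1$.

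\textbf{Step 1 (large signal).} Pair the two large-signal terms. Since $(8/5)/(6/5)=4/3$ and $4/(9/4)=16/9$,
\[
\tfrac45(8/5)^u4^v=\tfrac35(6/5)^u(9/4)^v\cdot\tfrac43(4/3)^u(16/9)^v .
\]
Put $X=(6/5)^u(9/4)^v\ge1$. The claim then becomes
\[
\tfrac35X\Bigl[\tfrac43(4/3)^u(16/9)^v-1\Bigr]+\tfrac15(2/5)^u(1/4)^v>\tfrac25(4/5)^u(6/11)^v .
\]
The bracket is at least $1/3$, and it grows strictly in every nonzero direction.

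\textbf{Step 2 (small signal).} Put $x=(4/5)^u(6/11)^v\le1$ and $y=(2/5)^u(1/4)^v\le x$. The crude bound, replacing $X$ and the bracket by their values at the origin, only gives $1/5+y/5$ on the left. This fails against $2x/5$ near the origin, where $x$ is close to $1$ and $y$ decays much faster than $x$. So the surplus coming from the growth of the bracket and of $X$ must be tracked explicitly.

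To do this I will work along a ray. I will separate the cases $v\ge u$ and $u\ge v$ exactly as in the state-$0$ lemma, writing $m=\min(u,v)$ and $r=|u-v|$. All four exponentials then become products of a common factor raised to the power $m$ and a one-variable power raised to the power $r$. I will then aim for inequalities of the form
\[
\alpha^r-1\le\beta^r(\gamma^r-1),\qquad \alpha<\gamma,\ \beta\ge1 ,
\]
which are proved by the same two-line argument as (A.2) and (A.3). Such an inequality shows that the increment of the $Q$ small-signal term is dominated by the increment of the $P$ large-signal surplus. The $m$-dependence should be handled, as there, by comparing the common factors: on the $P$ side the factor is at least $1$, on the $Q$ side at most $1$.

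\textbf{Step 3 (exceptional case).} When $r=0$ and $m>0$, the inequality is a strict comparison between two explicit constants raised to the power $m$. Strictness should come from the term that is strictly greater than $1$.

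I expect the main obstacle to be Step 2, specifically identifying which terms to pair so that every resulting one-variable inequality has the monotone form above with the correct signs. If no clean pairing exists, I would fall back on the mixture identity \eqref{eq:mix}, $Q=\tfrac13P+\tfrac23R$. Holder's inequality would bound $m_{Q1}$ by the weighted geometric combination $m_{P1}^{1/3}m_{R1}^{2/3}$; this is the convexity of $\log\sum_s w_s e^{\langle\ell_s,\cdot\rangle}$, applied after writing $\log$-ratios of the mixture via the log-sum inequality. It would then remain to verify the simpler comparison $m_{R1}\le m_{P1}$, where $R_1=U$, so that $m_{R1}$ is an explicit two-term expression. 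Strictness away from the origin then follows because Holder's inequality is strict unless the two ratio profiles are proportional, which they are not here.
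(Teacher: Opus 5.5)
Your proposal has a genuine gap: the main route never produces a proof, and the fallback rests on a false inequality.

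\textbf{Main route.} The structural premise does not carry over from the state-$0$ lemma. There, both $P$ terms share the factor $L^m$ because the two $P$ ratio vectors are permutations of each other. For state $1$, with $m=\min(u,v)$, the four terms carry four different $m$-bases: $(2/5)(1/4)=1/10$, $(8/5)\cdot 4=32/5$, $(4/5)(6/11)=24/55$ and $(6/5)(9/4)=27/10$. So your claim that the factor is at least $1$ on the $P$ side and at most $1$ on the $Q$ side is false: the $P$ small-signal base is $1/10$ and the $Q$ large-signal base is $27/10$. For the same reason, Step~3 is not a comparison of two constants raised to the power $m$. The one-variable inequalities of Step~2 are never written down.

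The missing idea is the paper's normalization by the $P$ small-signal term $c=(2/5)^u(1/4)^v$. With $t=4^u16^v\ge 1$ one gets $m_{P1}=c(\tfrac15+\tfrac45 t)$ and $m_{Q1}=c(\tfrac25\,2^u(24/11)^v+\tfrac35\,3^u9^v)$. Since $24/11\le 4$, the first $Q$ term is at most $\sqrt t$. Since $3<4$ and $9<16$, the second is strictly below $t$ off the origin. Then $2\sqrt t<1+t$ for $t>1$ finishes. This is the one-variable collapse your ray decomposition was looking for.

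\textbf{Fallback.} H\"older gives log-convexity of $M_E^i(u,v)$ in $(u,v)$ for a fixed experiment. It does not give log-convexity along mixtures of experiments, and in fact $m_{Q1}\le m_{P1}^{1/3}m_{R1}^{2/3}$ fails. At $u=0$, as $v\to\infty$, the left side grows like $(9/4)^v$ and the right side like $\bigl((100/9)^{1/3}\bigr)^v$, with $(100/9)^{1/3}\approx 2.23<9/4$. Mixing pushes the signal-$1$ ratio $Q_1/Q_2=9/4$ above the weighted geometric mean $4^{1/3}(5/3)^{2/3}$ of the component ratios; this is the failure of joint convexity of R\'enyi divergences of order above $1$.

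The fallback can be repaired by using instead the joint convexity of $x_1^{1+u+v}x_0^{-u}x_2^{-v}$ on the positive orthant, as in the state-$2$ lemma. That gives the arithmetic bound $m_{Q1}\le\tfrac13 m_{P1}+\tfrac23 m_{R1}$. You would then still have to prove $m_{R1}<m_{P1}$ off the origin, where $m_{R1}=\tfrac12(5/7)^v+\tfrac12(5/3)^v$. That comparison is true:
\begin{itemize}
\item $m_{P1}$ is strictly increasing in $u$, which handles $v=0$, $u>0$;
\item for $v>0$, put $w=4^v>1$; then $\tfrac15 w^{-1}+\tfrac45 w-\tfrac12(1+w)=(3w-2)(w-1)/(10w)>0$, while $m_{R1}\le\tfrac12(1+w)$.
\end{itemize}
None of this is in your proposal, so as written the fallback does not go through.
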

\begin{proof}
Set $t=4^u16^v\geq1$ and $c=(2/5)^u(1/4)^v>0$.  Factoring $c$ gives
\[
 m_{P1}=c\{\tfrac15+\tfrac45t\},\qquad
 m_{Q1}=c\{\tfrac25\,2^u(24/11)^v+\tfrac35\,3^u9^v\}.       \tag{A.4}
\]
Since $24/11\leq4$,
$2^u(24/11)^v\leq2^u4^v=\sqrt t$; and
$3^u9^v<t$ whenever $(u,v)\ne(0,0)$ (at least one of $3<4$ or
$9<16$ is used).  Finally $2\sqrt t<1+t$ for $t>1$.  Hence the bracket on
the right in (A.4) is strictly smaller than
\[
 \tfrac25\sqrt t+\tfrac35t<\tfrac15+\tfrac45t,
\]
which proves the lemma.
\end{proof}

\begin{lemma}[state $2$]
For $u,v\geq0$, not both zero, $m_{P2}(u,v)>m_{Q2}(u,v)$.
\end{lemma}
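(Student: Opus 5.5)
I would follow the template of the state-$1$ lemma: factor out the small-signal term of $m_{P2}$ and compare the remaining brackets. Put $c=(2/5)^u(1/4)^v>0$ and $t=4^u16^v\geq1$, exactly as in (A.4). Since $m_{P2}$ has the same two terms as $m_{P1}$, we get $m_{P2}=c\{\tfrac15+\tfrac45t\}$. Dividing the two terms of $m_{Q2}$ by $c$ uses the ratios $(22/15)/(2/5)=11/3$, $(11/6)/(1/4)=22/3$, $(8/15)/(2/5)=4/3$ and $(4/9)/(1/4)=16/9=(4/3)^2$. Hence
\[
 m_{Q2}=c\Bigl\{\tfrac{11}{15}\,a+\tfrac4{15}\,b\Bigr\},\qquad
 a=(11/3)^u(22/3)^v,\quad b=(4/3)^{u+2v}.
\]
After multiplying by $15$, the claim becomes $11a+4b<3+12t$.

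\textbf{Step 1: the large term.} Compare $a$ with $t$ factor by factor. Since $11/3<4$ and $22/3<16$, we have $a\leq t$, and the inequality is strict whenever $(u,v)\neq(0,0)$. The argument is the same as for $3^u9^v<t$ in the state-$1$ lemma.

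\textbf{Step 2: the small term.} It then suffices to show $4b\leq 3+t$. Write $w=u+2v\geq0$ and $x=(4/3)^w\geq1$. Then $b=x$ and $t=4^w$. Since $4\geq(4/3)^4=256/81$, we have $t=4^w\geq(4/3)^{4w}=x^4$ for $w\geq0$. By AM--GM, $x^4+1+1+1\geq4x$, so $4b=4x\leq3+x^4\leq3+t$.

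\textbf{Step 3: combining.} Adding the two steps gives $11a+4b<11t+3+t=3+12t$ when $(u,v)\neq(0,0)$, where the strictness comes from Step 1. Multiplying by $c/15>0$ yields $m_{P2}>m_{Q2}$.

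The only delicate point is Step 2. The naive bound $b\leq t$ is too weak, because it would require $4t<3+t$. One must instead exploit that $b$ grows like a small power of $t$. The observation $(4/3)^4\leq4$, which reduces the step to the AM--GM inequality for $x^4$ and three ones, is what makes the reduction work. Everything else is the same factor-by-factor comparison already used for state $1$.
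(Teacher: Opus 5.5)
Your proof is correct. The ratios $11/3$, $22/3$, $4/3$, $16/9$ are right, $t=4^{u+2v}$, and $(4/3)^4=256/81<4$ gives $b\le t^{1/4}\le\tfrac34+\tfrac14t$, which is exactly the slack the small-signal term needs.

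It takes a different route from the paper. The paper does not compare $Q_2$ with $P_2$ directly. It introduces the auxiliary experiment $R=(U,U,V)$ with $V=(7/10,3/10)$ and uses the mixture identity $Q=\tfrac13P+\tfrac23R$. It proves $m_{P2}>m_{R2}$ with the same $c,t$ factoring: the large $R$-term is strictly below $\tfrac7{10}t$, and the small one is bounded via $(3/2)^u(12/5)^v\le t^{1/3}\le\tfrac23+\tfrac13t$. It then invokes joint convexity of $x_0^{1+u+v}x_1^{-u}x_2^{-v}$ on the positive orthant to conclude $m_{Q2}\le\tfrac13m_{P2}+\tfrac23m_{R2}<m_{P2}$.

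You apply the same ``power bound plus weighted AM--GM'' device, with exponent $1/4$ instead of $1/3$, directly to $Q$. So you need neither the auxiliary experiment nor the convexity lemma, and your argument parallels the state-$1$ proof; it is the shorter and more elementary one. The paper's detour buys structural reuse: the decomposition $Q=\tfrac13P+\tfrac23R$ also drives the inverse-MGF comparison (A.12), so the positive-cone and negative-cone arguments share one mechanism. For this lemma alone that extra structure is not needed.
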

\begin{proof}
The argument uses the intermediate state $R$.  With
$t=4^u16^v$ and $c=(2/5)^u(1/4)^v$,
\[
 m_{P2}=c\{\tfrac15+\tfrac45t\},\quad
 m_{R2}=c\left\{\tfrac7{10}(7/2)^u(28/5)^v
                    +\tfrac3{10}(3/2)^u(12/5)^v\right\}.       \tag{A.5}
\]
The first product in the braces is strictly less than $t$ as soon as
$(u,v)\ne(0,0)$, because $7/2<4$ and $28/5<16$.  For the second one,
$(3/2)^3<4$ and $(12/5)^3<16$ imply
\[
 (3/2)^u(12/5)^v\leq t^{1/3}
 \leq \tfrac23+\tfrac13t,                              \tag{A.6}
\]
where the last inequality is weighted AM--GM.  Combining (A.5)--(A.6)
with the strict first bound gives $m_{P2}>m_{R2}$.

For completeness, the function
\[
 f_{u,v}(x_0,x_1,x_2)=x_0^{1+u+v}x_1^{-u}x_2^{-v}
\]
is convex on the positive orthant.  To see this directly, put
$\alpha=1+u+v$ and apply the weighted three-factor geometric-mean
inequality with weights $1/\alpha,u/\alpha,v/\alpha$ to the two triples
$(f(x),x_1,x_2)$ and $(f(y),y_1,y_2)$; raising the result to $\alpha$ and
cancelling the positive factors gives
$f(\lambda x+(1-\lambda)y)\leq\lambda f(x)+(1-\lambda)f(y)$.
Using \eqref{eq:mix} signal by signal therefore yields
\[
 m_{Q2}\leq\tfrac13m_{P2}+\tfrac23m_{R2}<m_{P2}.
\]
\end{proof}

\begin{proposition}[positive MGF inequalities]
For every $u,v\geq0$ with $(u,v)\ne(0,0)$ and every state $i$,
\[
                 \MGF_P^i(u,v)>\MGF_Q^i(u,v).
\]
At the origin all six MGFs equal one.
\end{proposition}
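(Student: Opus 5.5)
The proposition is an assembly of the three preceding lemmas, plus a short bookkeeping argument linking the explicit expressions $m_{Ei}$ to the general definition $\MGF_E^i$.

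First, I will check that $m_{Pi}(u,v)=\MGF_P^i(u,v)$ and $m_{Qi}(u,v)=\MGF_Q^i(u,v)$ for the fixed orderings: $(j,k)=(1,2)$ for $i=0$, $(0,2)$ for $i=1$, and $(0,1)$ for $i=2$. This is direct substitution of $U,A,B,C,D$ into the definition, signal by signal. For example, for $i=0$ under $P$ the first signal gives $\tfrac12(\tfrac{1/2}{1/5})^u(\tfrac{1/2}{4/5})^v=\tfrac12(5/2)^u(5/8)^v$. The displayed $m_{P0}$ lists the two signal terms in the other order, which is harmless because the sum is symmetric. I expect this step to be the only place where care is needed. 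The six expressions were written with particular orderings of the two signals, and a transcription slip there, such as a swapped ratio, would silently break the reduction. So I will recompute each base $E_i(s)/E_j(s)$ explicitly.

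Second, for the strict inequality with $(u,v)\neq(0,0)$, I invoke the state $0$, state $1$, and state $2$ lemmas in turn. Each states exactly $m_{Pi}(u,v)>m_{Qi}(u,v)$ on the punctured closed positive quadrant. By the identification in the first step, this is $\MGF_P^i>\MGF_Q^i$. Any other choice of ordering $(j,k)$ merely swaps $u$ and $v$. The quantifier ranges over the symmetric set $\{u,v\ge0\}\setminus\{(0,0)\}$, so the conclusion holds for either ordering.

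Third, at the origin every factor $(E_i(s)/E_j(s))^0$ equals $1$, so $\MGF_E^i(0,0)=\sum_sE_i(s)=1$ by normalization. This holds for all six experiment–state pairs and gives the final sentence of the statement.
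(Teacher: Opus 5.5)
Your proposal is correct and follows the paper's proof, which simply cites the state $0$, state $1$, and state $2$ lemmas as the three coordinates of the claim. Your extra steps make explicit what the paper leaves implicit: recomputing the bases to confirm that $m_{Pi}=\MGF_P^i$ and $m_{Qi}=\MGF_Q^i$ under the stated orderings (they do), noting that reordering $(j,k)$ only swaps $u$ and $v$, and checking $\MGF_E^i(0,0)=\sum_sE_i(s)=1$.
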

\begin{proof}
The three preceding lemmas are exactly the three state coordinates.
\end{proof}

\subsection*{Inverse MGFs on the simplex}
For $a,b\geq0$, $a+b\leq1$, define
\[
 G_{a,b}(X,Y,Z)=\sum_s X_s^{1-a-b}Y_s^aZ_s^b.
\]
The identity
\begin{equation}\label{eq:inverse}
 \MGF_E^i(-a,-b)=G_{a,b}(E_i,E_j,E_k)
\end{equation}
 follows term by term from
$x(x/y)^{-a}(x/z)^{-b}=x^{1-a-b}y^az^b$.
We use two elementary facts. First, $g(x,y,z)=x^py^qz^r$ is jointly concave
when $p,q,r\geq0$ and $p+q+r=1$ (weighted Holder).  Second, for each
fixed $z>0$, the map $t\mapsto z^t=\exp(t\log z)$ is convex in the exponent;
hence a positive weighted sum of such powers lies below its secant line.

\begin{lemma}[base comparison]
For $a,b\geq0$, $a+b\leq1$,
\begin{align*}
 G_{a,b}(U,A,B)&\leq G_{a,b}(U,U,V),\\
 G_{a,b}(A,U,B)&\leq G_{a,b}(U,U,W),\\
 G_{a,b}(B,U,A)&\leq G_{a,b}(V,U,U),                 \tag{A.7}
\end{align*}
where $W=(3/10,7/10)$.  The second target equals
$G_{a,b}(U,U,V)$ after swapping the two signals; the third is the corresponding
reparameterized target, as made explicit below.
\end{lemma}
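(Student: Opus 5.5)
The plan is to reduce each line of (A.7) to an explicit inequality between sums of two exponentials on the closed simplex $\Delta=\{a,b\ge0,\ a+b\le1\}$. I would then prove that inequality from vertex equalities plus a convexity/concavity comparison in the exponents, using the second elementary fact recorded before the lemma. I only outline the last step below, and it is the step I am least sure of.

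For the first line, substitute the two-point laws signal by signal. Multiplying both sides by the positive factor $2^{1-a-b}5^{a+b}$, the left side becomes $4^a+4^b$. The right side $G_{a,b}(U,U,V)=2^{-(1-b)}10^{-b}(7^b+3^b)$ becomes
\[
\Psi(a,b)=(5/2)^a\bigl[(7/2)^b+(3/2)^b\bigr].
\]
So the first line is equivalent to $\Phi(a,b):=4^a+4^b\le\Psi(a,b)$ on $\Delta$. At the three vertices the two sides agree: both equal $2$ at $(0,0)$, and both equal $5$ at $(1,0)$ and at $(0,1)$. The inequality therefore has equality at every vertex and can only be proved in the interior.

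The left side is easy. $\Phi$ is a sum of convex functions of the exponents, so on the simplex it lies below the affine interpolation of its vertex values:
\[
\Phi(a,b)\le 2+3(a+b).
\]
It then suffices to show $\Psi(a,b)\ge 2+3(a+b)$, and this is the main obstacle. $\Psi$ is not concave in $(a,b)$, and tangent-line bounds at the origin give only $2+1.83a+1.66b$, which is too weak. I would therefore not use an affine bound for $\Psi$. Instead I would compare $\Phi$ and $\Psi$ directly along each segment $b=\beta t$, $a=(1-\beta)t$, $t\in[0,1]$. On such a segment $\Phi$ and $\Psi$ are positive combinations of exponentials $c^t$. Since $\Phi=\Psi$ at $t=0$ and at $t=1$, it is enough to show that $\Psi-\Phi$ has no interior negative dip. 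I would try to show this by proving that $\Psi-\Phi$ is concave in $t$, or that it changes convexity at most once, by a Descartes-type sign-change count for sums of exponentials. If that fails, I would fall back on a direct two-variable argument in the style of (A.2)--(A.3): fix $b$, divide by $(5/2)^a$, and compare termwise using $4/(5/2)=8/5$ together with elementary bounds such as $x^t-1\le x^s(x^{t-s}-1)$.

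The second and third lines I would reduce to the first rather than redo the computation. In the second line, $A$ is $B$ with its two signals swapped, and $U$ is invariant under the swap. Exchanging the roles of the first two arguments, reordering the signals, and swapping $a\leftrightarrow b$ should therefore produce the same two-point inequality, with $W$ (the swapped $V$) on the right, as the lemma's remark indicates. The third line should reduce the same way after the reparameterization $(a,b)\mapsto(a',b')$ that moves the exponent $1-a-b$ onto the $U$-coordinate. I would verify explicitly that this map sends $\Delta$ into itself, so that the simplex hypothesis is preserved and the single two-point inequality proved above covers all three lines of (A.7).
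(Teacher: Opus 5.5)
\paragraph{There is a genuine gap.} The only nontrivial content of the first line is $\Phi\le\Psi$ away from the vertices, and you do not prove it. The routes you sketch also fail as stated.

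\emph{The affine detour.} It is not merely weak but closed: $\Psi(0,\tfrac12)=\sqrt{7/2}+\sqrt{3/2}\approx3.10<3.5$.

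\emph{The ray argument.} It assumes $\Phi=\Psi$ at $t=1$, but that holds only at the vertices. At $(\tfrac12,\tfrac12)$ you have $\Phi=4$ while $\Psi=\sqrt{5/2}\,(\sqrt{7/2}+\sqrt{3/2})\approx4.89$. So the whole hypotenuse $a+b=1$ is itself an inequality still to be proved.

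\emph{Concavity in $t$.} It fails: on the diagonal $a=b=t/2$, $\Psi-\Phi=(35/4)^{t/2}+(15/4)^{t/2}-2\cdot2^t$ has second derivative about $0.65>0$ at $t=0$.

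\emph{The Descartes count.} It is inconclusive. In the direction $(a,b)\propto(3,2)$, the exponentials $4^b$, $(5/2)^a(3/2)^b$, $4^a$, $(5/2)^a(7/2)^b$ appear in increasing order of rate with signs $-,+,-,+$. This holds for $\Psi-\Phi$ and equally for its second derivative. That pattern permits three real zeros, so the zero at $t=0$ does not rule out a dip in $(0,1)$.

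\paragraph{The missing idea.} Your normalization by $2^{1-a-b}5^{a+b}$ hides the structure. In the original coordinates, $G_{a,b}(U,U,V)=\sum_sU_s^{1-b}V_s^b=H(U,V;b)$ does not depend on $a$. Meanwhile, for fixed $b<1$, the left side $(1/2)^{1-b}[(4/5)^b(2/5)^a+(1/5)^b(8/5)^a]$ is convex in $a$. So it suffices to check $a=0$ and $a=1-b$, where the left side equals $H(U,B;b)$ and $H(A,B;b)$ respectively. Then:
\begin{itemize}
\item H\"older gives $H(A,B;b)\le1$;
\item $U=(A+B)/2$ and joint concavity give $H(U,B;b)\ge\tfrac12H(A,B;b)+\tfrac12\ge H(A,B;b)$;
\item $V=(A+5B)/6$ and concavity of $x\mapsto x^b$ give $H(U,B;b)\le H(U,V;b)$.
\end{itemize}
This is the paper's proof. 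Your fallback of fixing $b$ and dividing by $(5/2)^a$ lands exactly on this form, $(8/5)^a+4^b(2/5)^a\le(7/2)^b+(3/2)^b$ for $0\le a\le1-b$. What closes it is convexity in $a$ plus the two endpoint inequalities, not termwise bounds.

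\paragraph{The other two lines.} The same $a$-independence drives their reductions. Your description of the second line is off, since no $a\leftrightarrow b$ swap is involved. The correct identities are $G_{a,b}(A,U,B)=G_{1-a-b,b}(U,A,B)$ and $G_{a,b}(U,U,W)=G_{1-a-b,b}(U,U,V)$, the latter by relabelling signals. For the third line they are $G_{a,b}(B,U,A)=G_{b,1-a-b}(U,A,B)$ and $G_{a,b}(V,U,U)=G_{b,1-a-b}(U,U,V)$. Both parameter maps preserve the simplex.
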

\begin{proof}
For the first inequality, write $L=1-b$.  If $b<1$, direct expansion gives
\[
 G_{a,b}(U,A,B)=c_1(2/5)^a+c_2(8/5)^a,\quad
 c_1=(1/2)^L(4/5)^b,\quad c_2=(1/2)^L(1/5)^b.                \tag{A.8}
\]
Convexity of $x\mapsto e^{x\log z}$ on $[0,L]$ gives
\[
 G_{a,b}(U,A,B)\leq(1-a/L)G_{0,b}(U,A,B)
                    +(a/L)G_{L,b}(U,A,B).             \tag{A.9}
\]
The endpoints are $G_{0,b}(U,A,B)=H(U,B;b)$ and
$G_{L,b}(U,A,B)=H(A,B;b)$, where
$H(X,Y;r)=\sum_sX_s^{1-r}Y_s^r$.  Holder gives $H(A,B;b)\leq1$.
Since $U=(A+B)/2$, joint concavity of $x^{1-b}y^b$ gives
\[
 H(U,B;b)\geq\tfrac12H(A,B;b)+\tfrac12\geq H(A,B;b).  \tag{A.10}
\]
Also $V=(A+5B)/6$, and strict concavity of $x^b$ for $0<b<1$ gives
\[
 H(U,B;b)<H(U,V;b).                                  \tag{A.11}
\]
Equations (A.9)--(A.11) prove the first line of (A.7) for $0<b<1$.
For $b=0$ and $a>0$, strict concavity of $x^a$ gives
$G_{a,0}(U,A,B)=H(U,A;a)<1=G_{a,0}(U,U,V)$; at $(a,b)=(0,0)$ there is
 equality.  If $b=1$, necessarily $a=0$ and direct substitution gives the
weak inequality.

The remaining two lines are the same calculation after reindexing factors:
\[
 G_{a,b}(A,U,B)=G_{1-a-b,b}(U,A,B),\qquad
 G_{a,b}(B,U,A)=G_{b,1-a-b}(U,A,B),
\]
and
\[
 G_{a,b}(U,U,W)=G_{a,b}(U,U,V),\qquad
 G_{b,1-a-b}(U,U,V)=G_{a,b}(V,U,U).
\]
The first identity swaps the two signal labels; the second combines the same
swap with the displayed parameter reindexing.  This proves (A.7).
\end{proof}

\begin{lemma}[strictness in the open simplex]
If $a,b\geq0$, $a+b<1$, and $(a,b)\ne(0,0)$, all three inequalities in
(A.7) are strict.
\end{lemma}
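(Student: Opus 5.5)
The plan is to prove a stronger statement: the first line of (A.7) is strict at every point of the closed simplex $\{a,b\ge0,\ a+b\le1\}$ except its three vertices $(0,0)$, $(1,0)$, $(0,1)$. The second and third lines then follow from the reindexing identities in the proof of the base comparison. For this it suffices to check that those identities never send an admissible $(a,b)$ to a vertex.

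\emph{Step 1 (strictness of the first line).} The target does not depend on $a$:
\[
G_{a,b}(U,U,V)=\sum_sU_s^{1-b}V_s^b=H(U,V;b).
\]
Assume $0<b<1$, so that $L=1-b>0$ and $0\le a\le L$. Combine the secant bound (A.9) with $H(A,B;b)\le H(U,B;b)$ from (A.10). This gives
\[
G_{a,b}(U,A,B)\le(1-a/L)H(U,B;b)+(a/L)H(A,B;b)\le H(U,B;b).
\]
The strict inequality (A.11) then yields $G_{a,b}(U,A,B)<H(U,V;b)$, uniformly in $a\in[0,L]$.

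If $b=0$ and $0<a<1$, strict concavity of $x^a$ gives $H(U,A;a)<1$, as already noted in the base lemma. The only remaining points are the three vertices. There both sides equal $1$: the sums are $\sum U$, $\sum A$ or $\sum B$ on the left and $\sum U$ or $\sum V$ on the right. These vertices are exactly the exceptional set.

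\emph{Step 2 (transfer to the other two lines).} Take $a,b\ge0$ with $a+b<1$ and $(a,b)\ne(0,0)$. By the base lemma, the second line at $(a,b)$ is the first line at $(a',b')=(1-a-b,b)$, and its target equals $G_{a',b'}(U,U,V)$. This point lies in the closed simplex, since $a'+b'=1-a\le1$. It is not a vertex:
\begin{itemize}
\item $(a',b')=(0,0)$ would force $a+b=1$;
\item $(1,0)$ would force $a=b=0$;
\item $(0,1)$ would force $b=1$.
\end{itemize}
Each is excluded, so Step~1 gives strictness.

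Likewise, the third line at $(a,b)$ is the first line at $(b,1-a-b)$, with matching target $G_{b,1-a-b}(U,U,V)=G_{a,b}(V,U,U)$. This point is a vertex only if $a+b=1$, or $b=1$, or $a=b=0$, all of which are excluded. Finally, the first line itself is handled directly by Step~1, since $(a,b)$ is not a vertex.

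\emph{Main obstacle.} The delicate point is the bookkeeping in Step~2. The reindexing maps the open simplex onto regions that touch the boundary, for example $a'+b'=1$ when $a=0$. A proof of the first line only on the open simplex would therefore not transfer. This is why Step~1 must cover the closed simplex minus its vertices, including the edge $a+b=1$ with $0<b<1$. There the secant bound (A.9) degenerates to the single endpoint $a=L$, but the chain still goes through because (A.11) supplies the strictness independently of $a$.
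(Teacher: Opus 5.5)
Your proof is correct and takes the same route as the paper: you get strictness of the first line from (A.9)--(A.11), use strict Jensen at $b=0$, and then transfer to the other two lines via the reindexing identities. Your extra boundary bookkeeping is in fact needed: for $a=0$ the reindexed points $(1-b,b)$ and $(b,1-b)$ lie on the edge $a'+b'=1$, where the paper's stated reason (positivity of $1-a/(1-b)$) fails and its check that the transformed pair is nonzero does not suffice, whereas your chain $H(A,B;b)\le H(U,B;b)<H(U,V;b)$ holds uniformly in $a\in[0,1-b]$ and closes that case.
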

\begin{proof}
For the first line, if $b>0$, then $b<1$ and the coefficient
$1-a/(1-b)$ in (A.9) is positive; (A.11) makes the inequality strict.
If $b=0$, then $a>0$ and the strict Jensen argument following (A.11) applies.
For the second and third lines use the two reindexing identities above,
together with the target identities following them.  In the open simplex the
transformed pair is nonzero, so the first-line strictness applies.
\end{proof}

\begin{proposition}[inverse-MGF inequalities]
For $a,b\geq0$ with $a+b\leq1$,
\[
 G_{a,b}(P_i,P_j,P_k)\leq G_{a,b}(Q_i,Q_j,Q_k),\quad i=0,1,2.
\]
If in addition $a+b<1$ and $(a,b)\ne(0,0)$, every inequality is strict.
Equivalently, by \eqref{eq:inverse},
\[
 \MGF_P^i(-a,-b)<\MGF_Q^i(-a,-b)
\]
on the open simplex excluding the origin; equality holds at the origin.
\end{proposition}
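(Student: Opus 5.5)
The plan is to combine the base comparison (A.7) with the mixture identity \eqref{eq:mix}, $Q=\tfrac13P+\tfrac23R$, and the joint concavity of $G_{a,b}$. For $a,b\ge0$ with $a+b\le1$, each summand $X_s^{1-a-b}Y_s^aZ_s^b$ is a weighted geometric mean, hence jointly concave and positively homogeneous of degree one (weighted Hölder). Therefore $G_{a,b}$ is jointly concave in the triple $(X,Y,Z)$. The first step is to verify that the three coordinate triples obey the same mixture relation:
\[
(Q_i,Q_j,Q_k)=\tfrac13(P_i,P_j,P_k)+\tfrac23(R_i,R_j,R_k),
\]
which holds coordinatewise because \eqref{eq:mix} is coordinatewise. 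Concavity then gives
\[
G_{a,b}(Q_i,Q_j,Q_k)\ge\tfrac13G_{a,b}(P_i,P_j,P_k)+\tfrac23G_{a,b}(R_i,R_j,R_k).
\]
It therefore suffices to show $G_{a,b}(R_i,R_j,R_k)\ge G_{a,b}(P_i,P_j,P_k)$ for each $i$.

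The second step matches these $R$-terms with the right-hand sides of (A.7).
\begin{itemize}
\item For $i=0$, with order $(1,2)$, the $R$-triple is $(U,U,V)$ and the $P$-triple is $(U,A,B)$. This is exactly the first line of (A.7).
\item For $i=1$, with order $(0,2)$, the triples are $(A,U,B)$ versus $(U,U,V)$. The second line of (A.7) compares $(A,U,B)$ with $(U,U,W)$, and the stated identity $G_{a,b}(U,U,W)=G_{a,b}(U,U,V)$ closes this case.
\item For $i=2$, with order $(0,1)$, the triples are $(B,U,A)$ versus $(V,U,U)$, which is the third line of (A.7).
\end{itemize}
Here I must check that the reindexing used in that lemma matches the ordering convention for $\MGF^i$. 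The text notes that changing the order only swaps $u$ and $v$, that is $a$ and $b$; the domain is symmetric under this swap, so the statement is unaffected. Combining the two steps gives
\[
G_{a,b}(Q_i,Q_j,Q_k)\ge\tfrac13G(P)+\tfrac23G(P)=G(P),
\]
which is the weak inequality on the closed simplex.

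For strictness on the open simplex minus the origin, the strictness lemma makes each (A.7) inequality strict. Since the weight $\tfrac23$ is positive, the chain becomes strict. Then \eqref{eq:inverse} converts the statement into $\MGF_P^i(-a,-b)<\MGF_Q^i(-a,-b)$. At the origin every $G_{0,0}$ equals $\sum_sX_s=1$, which gives equality.

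I expect the main obstacle to be bookkeeping rather than analysis. The argument needs the three reindexing identities behind (A.7), namely $G_{a,b}(A,U,B)=G_{1-a-b,b}(U,A,B)$ and its companions. These must send the open simplex minus the origin into a region where the first-line strictness applies. For instance, $(a,b)\mapsto(1-a-b,b)$ sends the open condition $a+b<1$ to $1-a>0$, which could reach the edge $a'+b'=1$ only when $a=0$. I would verify case by case that the first-line strict argument still covers the image points, using that (A.9)–(A.11) give strictness whenever $b'\in(0,1)$, and the strict Jensen step when $b'=0$ and $a'>0$. I would also check that the corresponding $R$-target is exactly what appears after reindexing, using the swap of signal labels between $V$ and $W$.
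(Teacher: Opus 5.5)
Your proposal is correct and follows essentially the same route as the paper: the mixture identity $Q=\tfrac13P+\tfrac23R$ and joint concavity give (A.12), the base comparison (A.7) gives the weak inequality, and its strict version together with the positive weight $\tfrac23$ gives strictness. Your extra check that the reindexed parameters can land on the edge $a'+b'=1$ (when $a=0$) addresses a point the paper's strictness lemma passes over quickly, and it resolves as you expect: for $0<b'<1$ the chain (A.9)--(A.11) is strict for every $a'\in[0,1-b']$, and $b'=0$ arises only in the second line, where $a'=1-a\in(0,1)$, so the strict Jensen step applies.
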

\begin{proof}
For each signal $s$, \eqref{eq:mix} and joint concavity of
$g(x,y,z)=x^{1-a-b}y^az^b$ imply
\[
 G_{a,b}(Q_i,Q_j,Q_k)\geq
 \tfrac13G_{a,b}(P_i,P_j,P_k)+\tfrac23G_{a,b}(R_i,R_j,R_k).    \tag{A.12}
\]
The base lemma identifies the corresponding $R$ coordinate with the right
side of (A.7).  Thus (A.12) gives the weak statement, and the strict base
lemma makes the right side strictly larger than the $P$ coordinate in the
open simplex.
\end{proof}

\subsection*{Boundedness, genericity, and KL inequalities}
Every entry of $P$ and $Q$ is positive and each row sums to one.  All
likelihood ratios lie in $[1/4,4]$.  For the unordered state pairs, the
extreme likelihood ratios are
\[
\begin{array}{c|cc|cc}
 &\multicolumn{2}{c|}{P}&\multicolumn{2}{c}{Q}\\
\text{pair}&\min&\max&\min&\max\\ \hline
0,1&5/8&5/2&5/6&5/4\\
0,2&5/8&5/2&15/22&15/8\\
1,2&1/4&4&6/11&9/4
\end{array}
\]
For a reversed ordered pair, the extrema are reciprocals of the displayed
ones, so they are also distinct. This verifies boundedness, the common ratio
bound and pairwise genericity.

Write $K_{ij}^E=\KL(E_i\Vert E_j)$.  Expanding the two signal terms and
clearing positive common denominators gives the following six certificates;
strict monotonicity of $\log$ turns each integer inequality into
$K_{ij}^Q<K_{ij}^P$:
\[
\begin{array}{c|c|c}
(i,j)&\text{logarithmic comparison}&\text{integer certificate}\\ \hline
(0,1)&\log2+\log8<\log4+\log6&2\cdot8<4\cdot6\\
(0,2)&2\log15+\log2<2\log5+\log22&15^2\cdot2<5^2\cdot22\\
(1,0)&2\log4+3\log6<\log2+4\log8&4^2 6^3<2\,8^4\\
(1,2)&2\log6+3\log9<6\log4+2\log11&6^2 9^3<4^6 11^2\\
(2,0)&11\log22+15\log5<27\log2+15\log15&22^{11}5^{15}<2^{27}15^{15}\\
(2,1)&11\log11<11\log6+4\log9+5\log4&11^{11}<6^{11}9^4 4^5
\end{array}
\]
For example, the first row compares
$K_{01}^P=\frac12\log(25/16)$ and
$K_{01}^Q=\frac12\log(25/24)$; the other rows are obtained by the same
expansion, with positive factors $1/5$, $1/15$, or $1/2$ suppressed.
The displayed integer inequalities follow by multiplication, and hence
all ordered KL inequalities are strict.

\section*{References}
\begingroup
\small
\raggedright

\begin{itemize}
\item Blackwell (1953), ``Equivalent Comparisons of Experiments,'' \textit{Annals of Mathematical Statistics}, \url{https://doi.org/10.1214/aoms/1177729032}.
\item Strassen (1965), ``The existence of probability measures with given marginals,'' \textit{Annals of Mathematical Statistics}.
\item Mu, Pomatto, Strack, and Tamuz, ``From Blackwell Dominance in Large Samples to R\'enyi Divergences and Back Again,'' \textit{Econometrica}, \url{https://doi.org/10.3982/ECTA17548}.
\item Ashkenazi-Golan and Lehrer, ``Blackwell's Comparison of Experiments and Discounted Repeated Games,'' \textit{Games and Economic Behavior}, \url{https://doi.org/10.1016/j.geb.2019.06.003}.
\item P\'eski, ``Comparison of Information Structures in Zero-Sum Games,'' \textit{Games and Economic Behavior}, \url{https://doi.org/10.1016/j.geb.2007.06.004}.
\item Greenblatt et al., ``AI Control: Improving Safety Despite Intentional Subversion,'' \url{https://arxiv.org/abs/2312.06942}.
\item Hubinger et al., ``Sleeper Agents: Training Deceptive LLMs that Persist Through Safety Training,'' \url{https://arxiv.org/abs/2401.05566}.
\end{itemize}
\endgroup

\end{document}